\renewcommand{\rem}[1]{\textbf{{\color{red} #1}}}
\renewcommand{\rem}[1]{} % uncomment to remove reminders
\newlength{\singlefigwidth}
\newcommand{\modelname}[1]{\texttt{#1}}
\newcommand{\mhyperbolic}[1]{\modelname{hyperbolic}\ifx&#1&%
\xspace
\else
(#1)\xspace
\fi}
\newcommand{\mmixture}[1]{\modelname{mixture}\ifx&#1&%
\xspace
\else%
(#1)\xspace
\fi}
\newcommand{\mfixed}[1]{\modelname{fixed}\ifx&#1&%
\xspace
\else%
(#1)\xspace
\fi}
\renewcommand{\complement}[1]{\mkern 1.5mu\overline{\mkern-1.5mu#1\mkern-1.5mu}\mkern 1.5mu}
\newcommand{\youtube}{\datasetname{YouTube}\xspace}
\newcommand{\amazon}{\datasetname{Amazon}\xspace}
\newcommand{\friendster}{\datasetname{Friendster}\xspace}
\newcommand{\lj}{\datasetname{LiveJournal}\xspace}
\newcommand{\orkut}{\datasetname{Orkut}\xspace}
\newcommand{\dblp}{\datasetname{DBLP}\xspace}
\newcommand{\polbooks}{\datasetname{PolBooks}\xspace}
\newcommand{\jazz}{\datasetname{Jazz}\xspace}
\newcommand{\emaild}{\datasetname{Email}\xspace}
\newcommand{\erdos}{\datasetname{Erd\H{o}s}\xspace}
\title{Hyperbolae Are No Hyperbole:\\ Modelling Communities That Are Not Cliques}
\author{
Saskia Metzler\\
\normalsize{Max Planck Institute for Informatics}\\
\normalsize{Saarland Informatics Campus, Germany}\\
\normalsize{\texttt{smetzler@mpi-inf.mpg.de}}
%\IEEEauthorblockN{Saskia Metzler}
%\IEEEauthorblockA{Max\smallspace Planck\smallspace Institute\smallspace for\smallspace Informatics\\
%Saarland\smallspace Informatics\smallspace Campus,\smallspace Germany\\
%smetzler@mpi-inf.mpg.de}
\and
Stephan G\"unnemann\\
\normalsize{Dept. of Informatics \& Inst. for Advanced Study}\\
\normalsize{Technical University of Munich, Germany}\\
\normalsize{\texttt{guennemann@in.tum.de}}
%\IEEEauthorblockN{Stephan G\"unnemann}
%\IEEEauthorblockA{Dept.\smallspace of\smallspace Informatics\smallspace\&\smallspace Inst.\smallspace for\smallspace Advanced\smallspace Study\\
%Technical\smallspace University\smallspace of\smallspace Munich,\smallspace Germany\\
%guennemann@in.tum.de}
\and
Pauli Miettinen\\
\normalsize{Max Planck Institute for Informatics}\\
\normalsize{Saarland Informatics Campus, Germany} \\
\normalsize{\texttt{pmiettin@mpi-inf.mpg.de}}
%\IEEEauthorblockN{Pauli Miettinen}
%\IEEEauthorblockA{Max\smallspace Planck\smallspace Institute\smallspace for\smallspace Informatics\\
%Saarland\smallspace Informatics\smallspace Campus,\smallspace Germany\\
%pmiettin@mpi-inf.mpg.de}
}
\date{}
\begin{document}
	
\maketitle
\begin{abstract}
Cliques  are frequently used to model  communities: a community is a set of nodes where each pair is equally likely to be connected. But studying real-world communities reveals that they have more structure than that. In particular, the nodes can be ordered in such a way that (almost) all edges in the community lie below a hyperbola. In this paper we present three new models for communities that capture this phenomenon. Our models explain the structure of the communities differently, but we also prove that they are identical in their expressive power. Our models fit to real-world data much better than traditional block models or previously-proposed hyperbolic models, both of which are a special case of our model. Our models also allow for intuitive interpretation of the parameters, enabling us to summarize the shapes of the communities in graphs effectively.

%%% Local Variables:
%%% mode: latex
%%% TeX-master: "paper"
%%% End:

\end{abstract}

%\section*{Categories and Subject Descriptors}
%H.2.8\,[\textbf{Database\,management}]:\,Database\,applications--\textit{Data\,mining}
%\category{H.2.8}{Database management}{Database applications}[Data mining]

\section{Introduction}
\label{sec:introduction}

Community detection in graphs has gathered significant research interest in recent years. So far, most approaches have (explicitly or implicitly) modelled communities as (quasi\nobreakdash-) cliques, i.e.\ sets of nodes where every node is connected to (almost) every other node, that is, every edge within the community is equally likely. 

We argue that a clique is often not a realistic model for real-world communities. Consider the adjacency matrix of a community from the \youtube data from the Stanford Large Network Dataset collection~\cite{leskovec14snap} in Figure~\ref{fig:exampleCommYoutube}. In its original ordering, the community does look like a quasi-clique. But if we order the nodes by their degree (Figure~\ref{fig:exampleCommYoutube_ordered}), we see immediately that the  community takes the shape of a hyperbola: there is a clear curve such that it is much more likely to see an edge below this curve than above it. Hence, \emph{not every edge between a pair of nodes in a community is equally likely}; %We have observed % double-blind!!
a phenomenon that has been observed in multiple real-world graphs~\cite{araujo2014beyond}.

While \cite{araujo2014beyond} introduces an initial model to address this aspect, the proposed model is very restricted, capturing only certain shapes of communities and ignoring the inherent sparsity of real world networks (see Sections~\ref{sec:related-work} and \ref{sec:hycom} for a detailed discussion). To overcome these limitations, we propose a novel model for  communities that explicitly captures the shape of the edge distribution and that incorporates sparsity in its probabilistic formulation. Most importantly, our model contains (quasi-) cliques as well as the model of~\cite{araujo2014beyond} as a special case.
%; thus, generalizing existing modelling assumptions.

\begin{figure}[tb]
	\vspace*{-1mm}
	\centering
	\subfloat[Unordered.]{
		\centering
		\includegraphics[width=0.472 \linewidth]{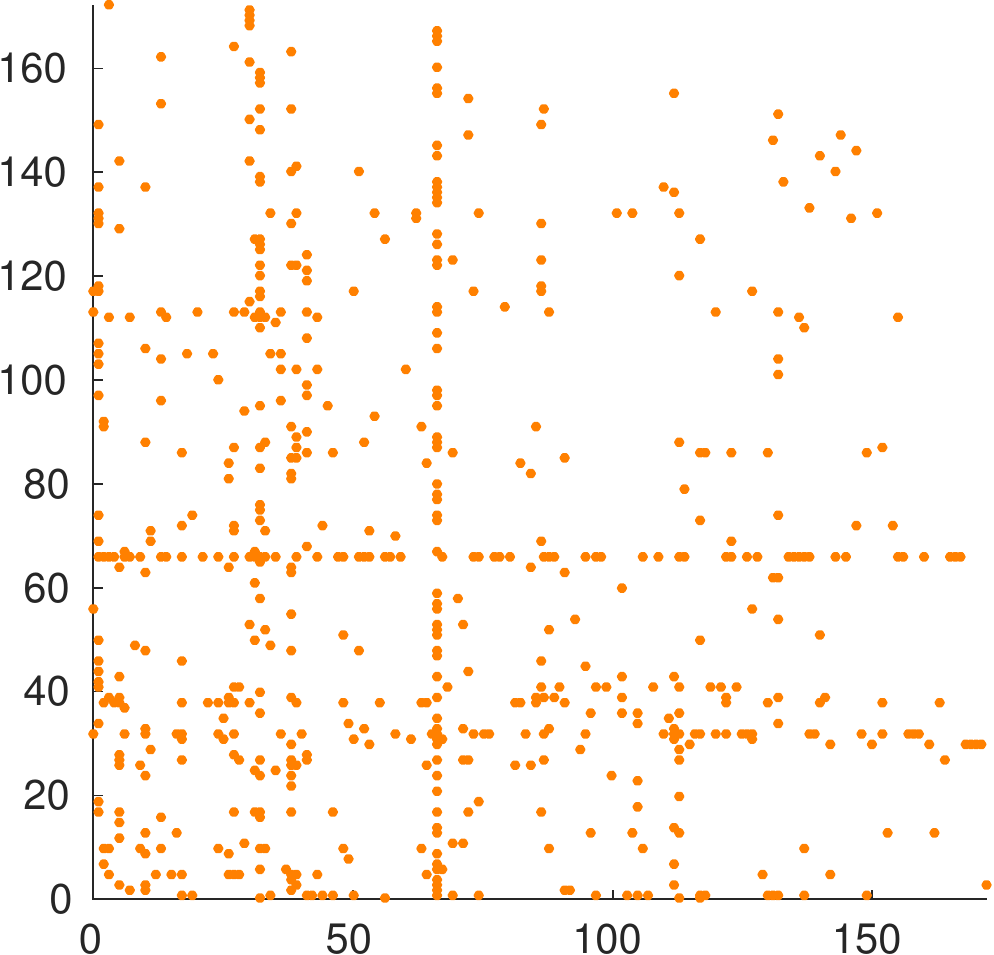}
		\label{fig:exampleCommYoutube}
	}%
	\subfloat[Degree ordered.]{
		\centering
		\includegraphics[width=0.472 \linewidth]{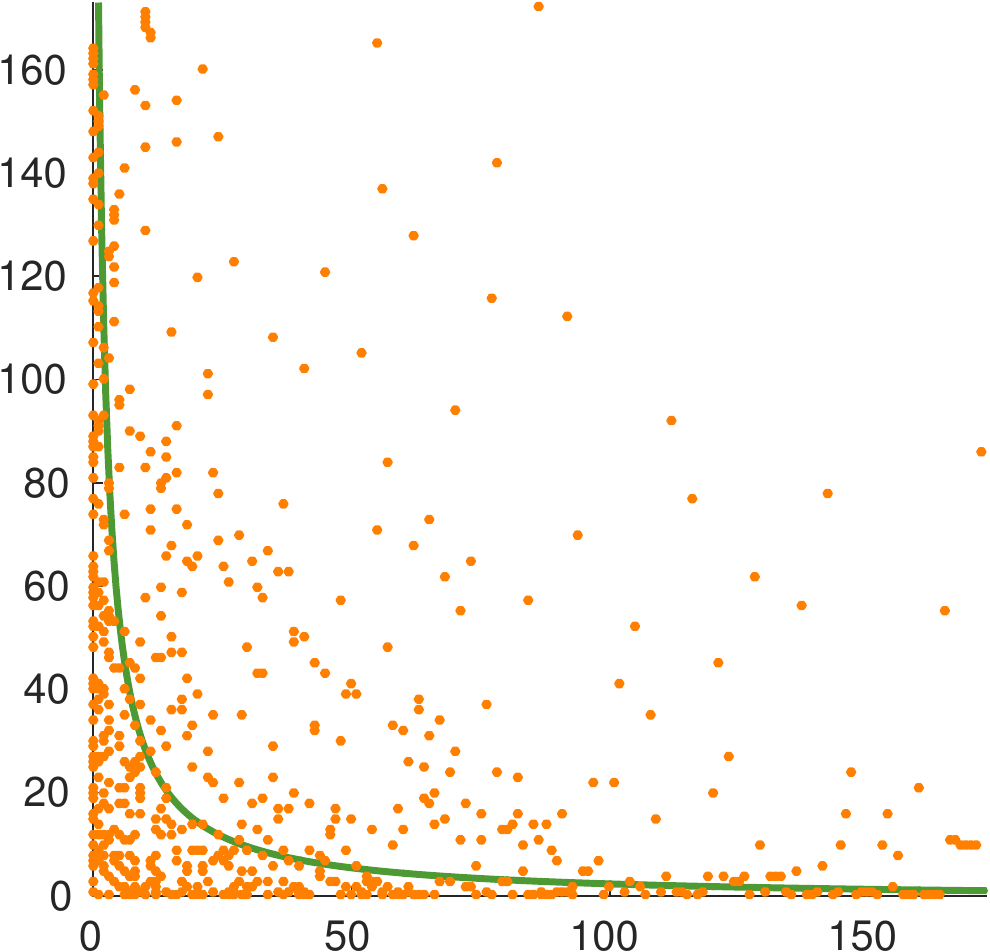}
		\label{fig:exampleCommYoutube_ordered}
	}
	\caption{Adjacency matrix of a community from the \youtube data unordered and ordered by induced degree with a model fitted.}
	\label{fig:exampleCommYoutube_both}
\end{figure}

%In Figure~\ref{fig:exampleCommYoutube_ordered}, we see that the curve determined by our model fits very well to the shape of the community, and the likelihood of the community under our model is much higher than under the trivial block model (see Section~\ref{sec:full-graph} for the details about the likelihood computation). In fact, 
In fact, we present three different models that allow us to capture different features of the communities and apply different optimization techniques when fitting the model in Section~\ref{sec:models}. We will also prove that these three models are equivalent in the sense that a community in one model can be easily transformed to an equivalent community in another model. Yet, the models are not redundant, as the different views they provide allow for more in-depth understanding of the communities (and the models, as well) -- and they show how our approach generalizes existing models.

 In addition to modelling individual communities, we also present a model for the full graph given a set of communities in Section~\ref{sec:full-graph}. In Section~\ref{sec:complexity} we study the computational complexity of some problems related to the modelling and in Section~\ref{sec:algorithms} we present our algorithms for fitting the models to individual communities and the full graph alike. 

Our experimental evaluation, in Section~\ref{sec:experiments}, shows that we can efficiently model real-world graphs, yielding significantly more likely models than what we can get by modelling the communities as quasi-cliques or existing hyperbolic models. We will also demonstrate how our models will allow us to gain more understanding on the structure of the communities.

Our main contributions in this paper are
\begin{itemize}
\item We present three different but equivalent models for capturing non-uniform edge distributions inside communities.
\item We show how to fit the models to the communities, and how to fit a model for the full graph to a graph with a set of communities.
\item We show that our approach explains real graphs much better than traditional quasi-clique based models and also improves the model of~\cite{araujo2014beyond}.
\end{itemize}

%% I don't like these "The rest..." sections -Pauli
%The rest of the paper is organized as follows: In Section~\ref{sec:models} we present our models and prove their equivalency. We use them to build a model for the full graph in Section~\ref{sec:full-graph}, and in Section~\ref{sec:algorithms} we explain how to fit our models to communities and graphs. We test our approach, and report the results, in Section~\ref{sec:experiments}.

%%% Local Variables:
%%% mode: latex
%%% TeX-master: "paper"
%%% End:

%\input{notation}

\section{Related Work}
\label{sec:related-work}

Nodes in real-world networks -- including social networks and gene-regulatory networks -- often organize into communities or clusters. 
A variety of methods have been introduced in the literature for finding the communities~\cite{aggarwal2010managing}.
Even though the proposed approaches seem highly diverse, the vast amount of previous community detection methods have been either explicitly or implicitly aimed at detecting \emph{block-shaped} areas of \emph{uniform density} in the adjacency matrix. This includes prominent techniques such as stochastic block-models \cite{nowicki2001estimation,airoldi2009mixed}, affiliation network models \cite{leskovec:agmfit}, pattern based techniques (e.g.\ detection of quasi-cliques \cite{DBLP:journals/kais/GunnemannFBS14,DBLP:conf/kdd/BodenGHS12}), or cross-associations~\cite{crossassocs}.

 In this paper, following the observations of \cite{araujo2014beyond}, we argue that communities in real networks do not show such a density profile. As illustrated in Figure~\ref{fig:exampleCommYoutube_both}, they are better represented by using a hyperbolic model. Our model captures that the density within the group is not uniformly distributed -- some nodes show stronger connectivity among its peers.

 \paragraph{Overlapping community detection} Unequal connectivity structure has also been considered in the area of overlapping community detection. Here, in contrast to classical partitioning approaches, each node might belong to multiple groups. As, e.g., noticed in \cite{leskovec:agmfit}, nodes participating in multiple communities lead to areas of higher density. Accordingly, one might conclude that these approaches can handle our scenario of hyperbolic communities. This conclusion, however, is incorrect.
 
 Given a certain community (i.e.\ a set of nodes), models following the idea of affiliation networks \cite{leskovec:agmfit}, generate edges following the same probability; thus, leading to block-like structures in the adjacency matrix. 
 Similarly, techniques such as mixed-membership block models \cite{airoldi2009mixed} assume uniform density per community.
 Indeed, all these methods are highly related to the principle of Boolean Matrix factorization \cite{miettinen2008discrete} -- and its goal of finding overlapping \emph{blocks} in binary matrices. 
 Clearly, handling overlap and handling hyperbolic structure are two different aspects. As we will discuss in Section \ref{sec:full-graph}, for hyperbolic communities itself, three different types of (non-) overlap can be considered.
 
  \paragraph{Hyperbolic community detection}
The aspect of non-block communities has been discussed in \cite{araujo2014beyond}.
The proposed model, called HyCom, still does not capture real scenarios well due to two reasons: First, the modeled hyperbolic communities are restricted in their possible shapes. That is, not all patterns appearing in real data can be well represented by the model.
Indeed, as we will show formally in Section~\ref{sec:hycom}, our model contains \cite{araujo2014beyond} as a special case. Second, \cite{araujo2014beyond}  assumes a density of 100\% inside a community, thus violating the general property of sparsity. In contrast, our model allows varying density among the different communities. The benefits of our model are also clearly confirmed in the experimental analysis. Nested matrices, of which hyperbolic graphs are a special case, have nonnegative rounding rank 1~\cite{neumann16what}, suggesting that rounding rank decompositions could provide an approach for finding hyperbolic communities. Yet, the connection to nested matrices does not generalize to higher rounding ranks. 

  \paragraph{Other graph patterns}
  For the purpose of graph compression, other graph patterns going beyond quasi-cliques have been considered. In \cite{DBLP:journals/tkde/LimKF14}, graphs are considered as a collection of hubs connecting to spokes. These hubs are recursively connected to super-hubs and so on. Extending this idea, the work \cite{DBLP:journals/sadm/KoutraKVF15} compresses graphs by using patterns such as stars or bipartite cores.
  None of these works exploits the idea of hyperbolic community structure.

%%% Local Variables:
%%% mode: latex
%%% TeX-master: "paper"
%%% End:

\section{Models for Communities}
\label{sec:models}
Our goal is to model the aforementioned structure in communities and to that end we present three different models. It should be noted that all these models contain the (quasi-) cliques and the model of \cite{araujo2014beyond} as special cases (see Section~\ref{sec:hycom}). We will also show that these three different models -- with different intuitions behind them -- are equivalent representations besides having different parameterizations. These different models not only give three different interpretations but also enable ways to easily compare with other approaches and to efficiently compute their fit to given data.

\subsection{General modelling decisions}
\label{sec:model:general}

Our input is an undirected graph $G = (V, E)$ with $n$ nodes and $m$ edges. We will assign a number from $\{0, \ldots, n-1\}$ to the vertices and use $(i,j)$ to denote both a pair of vertices and the (potential) undirected edge between them. We will represent the graph using its \emph{adjacency matrix} $\mA = (a_{ij}) \in \B^{n\times n}$.

A \emph{community} $C$ is a tuple $(V_C, \pi_C, \Theta_C)$. The set $V_C\subseteq V$ contains the nodes in the community, and we write $n_C = \abs{V_C}$. The permutation $\pi:V_C\to \{0, \ldots, n_C - 1\}$ orders the nodes. In general, we assume the nodes to be ordered according to their degrees inside the community. The crucial part of our model is the following: not every edge between the nodes in $V_C$ is necessarily part of our community -- that assumption would make all of our communities quasi-cliques. Rather, our community models are defined using functions $f: \{0,\ldots, n_C - 1\} \times \{0,\ldots, n_C -1\}\times \Theta \to \{0, 1\}$ operating on a set of parameters $\Theta$ and deciding for any pair of vertices $(i, j)\in \{0,\ldots,n_C-1\}\times \{0, \ldots, n_C-1\}$ if an edge between $i$ and $j$ is part of the community or not. We will define these functions in the subsequent sections. 

Notice that the function $f$ only gets the indices relative to the subgraph, not to the full graph, that is, to test a pair $(i, j)\in V_C\times V_C$, we need to compute $f(\pi_C(i), \pi_C(j), \Theta_C)$. For brevity, we will often omit the permutation and will simply assume that $i,j\in\{0, \ldots, n_C - 1\}$.

Every community is associated with two sets of edges: the \emph{area} of the community, $A_C$, defined as $A_C = \{(i, j)\in V_C\times V_C : f(i, j, \Theta_C) = 1\}$, and the \emph{edges} of it, $E_C = E \cap A_C$. For notational convenience, we also define their complements (with respect to the community and the area, respectively): $\complement{A_C} = (V_C\times V_C) \setminus A_C$ and $\complement{E_C} = A_C\setminus E$.

\rem{Where do we explain that the final order is based on the model, not on the actual edges?}

\paragraph{Probabilistic model for a community}
In practice the communities are rarely, if ever, exact. That is, some edges $(i,j)\in A_C$ are not in $E_C$, and some edges $(i,j)\in E$ that go between the nodes of the community are not in $A_C$. To model these imperfect communities, we consider a  \emph{probabilistic model} of the community. Given a community $C$, we assume that edges $(i,j)\in V_C\times V_C$ are drawn from a Bernoulli distribution, $a_{i,j} \sim \text{Bernoulli}(p_*)$, where $\mA = (a_{ij})$ is the adjacency matrix of the graph, and $p_*$ is the \emph{density} of the area that the edge belongs to. For a single community, we have two kinds of areas: the area of the community $A_C$ and its complement $\complement{A_C}$. 
%For notational convenience, we can also define the complement of $E_C$ as the set of edges that are in $\complement{A_C}$, that is, $\complement{E_C} = E\cap \complement{A_C}$. 
We denote the density of the area of the community by 
\begin{equation}
\label{eq:d_C}
d_C = \abs{E_C}/\abs{A_C} \; , 
\end{equation}
and the density of the area outside the community by 
\begin{equation}
\label{eq:d_O}
d_O = \abs*{E\cap \complement{A_C}}/\abs{\complement{A_C}} \; .
\end{equation}

These densities correspond to the maximum-likelihood solutions of the variables $p_*$ for the edges that are inside or outside of the community. We can now consider the likelihood of the subgraph induced by the community $G|_{V_C}$ given the community $C$, $L(G|_{V_C} \mid C)$. The likelihood of an edge that is in community $C$ is $d_C$; the likelihood of a pair $(i,j)$ that is in the area of $C$ but that is not an edge of $G$ is $1-d_C$; the likelihood of an edge that is not in the community is $d_O$; and the likelihood of a pair $(i ,j)$ that is not in the community's area and is not an edge of $G$ is $1 - d_O$. This gives us
\begin{equation}
  \label{eq:comm_LL}
  \begin{split}
  \log L(G|_{V_C} \mid C) &= \abs{E_C}\log(d_C) + \abs*{\complement{E_C}}\log(1-d_C) \\
  &\quad + \abs*{E\cap \complement{A_C}}\log(d_O) \\
  &\quad + \abs*{\complement{A_C} \setminus E}\log(1 - d_O)\; .
  \end{split}
\end{equation}

\subsection{Area restricted under a hyperbola}
\label{sec:model:hyperbola}

For our first model, we can notice from Figure~\ref{fig:exampleCommYoutube_both} that when the nodes of a community are ordered in the induced degree order, the edges lie under a hyperbolic curve. To define the hyperbola we identify the vertex indices of the community as points in $x$ and $y$ axes. We use $i$ and $j$ instead of $x$ and $y$ to emphasize this connection. We will only consider the area $[0, n_C -1]\times [0,n_C -1]$ from the non-negative quadrant, as that is where the values important to our community are. The equation for a (rectangular) hyperbola is
\begin{equation}
  \label{eq:hyperbola}
  (i + p)(j + p) = \theta\; ,
\end{equation}
with the centre at $(-p, -p)$. Figure~\ref{fig:hyperbola} illustrates one hyperbola. Following the model, an edge $(i, j)$ is considered to be in the community if $(i + p)(j + p) \leq \theta$. We call this model \mhyperbolic{$p, \theta$} and write $(i, j) \in \mhyperbolic{p, \theta}$ if $(i + p)(j + p) \leq \theta$.

From Figure~\ref{fig:hyperbola} we can gain some intuition to the parameters $p$ and $\theta$: different values of $p$ will yield different shapes of the gradient (the coloured background in Figure~\ref{fig:hyperbola}), while different values of $\theta$ will move the line away from the origin.

\paragraph{Valid range of parameters} The values \eqref{eq:hyperbola} assigns to elements $(i, j)$ attain their minimum at the centre $(-p, -p)$. To make sure that all elements $(i, j)\in\N\times\N$ that are under the curve \eqref{eq:hyperbola} are always in the community, we must bound $p$. A simple boundary is to enforce that $p \geq 0$, though in the next section  we will derive a more relaxed boundary. Other than this boundary, $p$ and $\theta$ can be any values.

\subsection{Fixed points in the curve}
\label{sec:model:fixed}

The shape of the hyperbola is not easy to interpret from~\eqref{eq:hyperbola}, and hence it is not easy to say, by just looking at the parameters $p$ and $\theta$, whether the community is `fat' or `skinny'. To make the model parameters more interpretable, we can consider two points in the curve: the point at which it crosses the diagonal (i.e.\ when $i=j$), and the point at which the hyperbola exits the community (i.e.\ $j$ for which $i=n_C$ or vice versa). We call the former $\gamma$ and the latter $H$, and we can consider them as two values that define some $p$ and $\theta$ such that
\begin{align}
  (\gamma + p)(\gamma + p) &= \theta \label{eq:gamma_def} \\
  (H + p)(n_C - 1 + p) &= \theta\; . \label{eq:H_def}
\end{align}

\begin{figure}[tb]
	\centering
	\includegraphics[width=\singlefigwidth]{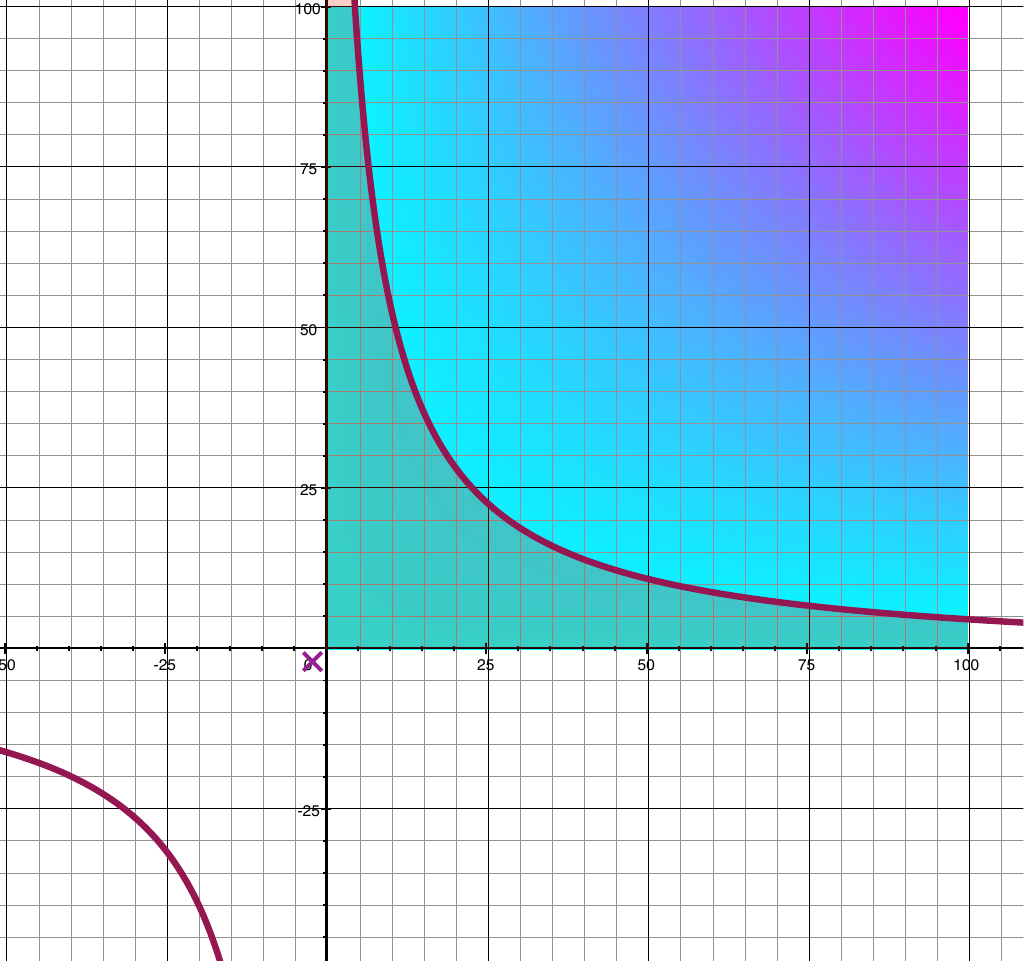}
	\caption{A hyperbola with $p = 2.06$ and $\theta = 673$ (dark red lines). The centre $(-p, -p)$ is marked with a cross. The colours in the background of the nonnegative quadrant indicate the values of $(i + p)(j + p)$ for $i,j \in [0, 99]$, with higher values moving from cyan to magenta. The area of the community is the solid-coloured area under the curve.}
	\label{fig:hyperbola}
\end{figure}

To interpret the parameters, it is helpful to divide every community into two parts: \emph{core} and \emph{tail}. The core consists of nodes that form a (quasi-) clique, while the tail consists of nodes that are mainly connected to the core. This is illustrated in Figure~\ref{fig:fixed_example} where the core is shaded in dark blue and the tail in light red. The parameter $\gamma$ is the size of the core (minus 1 to account for zero-based indexing) -- the larger $\gamma$, the larger clique the community has -- while the parameter $H$ tells how `fat' the tails are. A (quasi-) clique would have large $\gamma$ and $H$, while a star would have $\gamma = H = 0$.
We will call this model \mfixed{$\gamma, H$}. 

\begin{figure}[tb]
	\centering
	\includegraphics[width=\singlefigwidth]{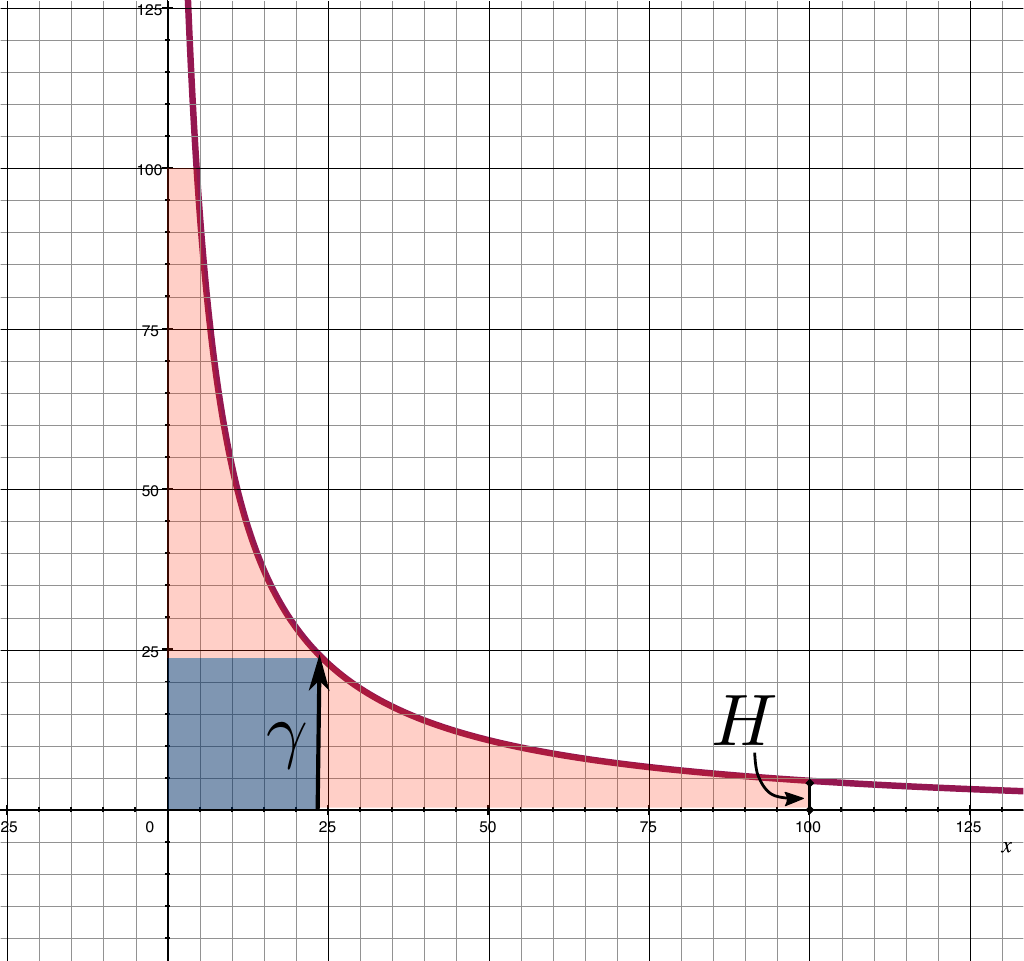}
	\caption{The parameter $\gamma$ explains the size of the core of the community (dark-shaded box), while $H$ explains the height of the tail at the end of the community.}
	\label{fig:fixed_example}
\end{figure}

\paragraph{Equivalence} Given equations \eqref{eq:gamma_def} and \eqref{eq:H_def}, it is hardly surprising that \mfixed{} is equivalent to \mhyperbolic{}. Formally, we can solve $p$ and $\theta$ given $\gamma$, $H$, and $n_C$ (the size of the community) as follows:
\begin{align}
  p &= \frac{\gamma^2 - (n_C-1)H}{(n_C-1) + H - 2\gamma}\label{eq:p_given_h_gamma} \\
  \theta &= \frac{(\gamma - H)^2(\gamma - n_C - 1)^2}{(n_C - 1 + H - 2\gamma)^2} \; .\label{eq:theta_given_h_gamma}
\end{align}
Similarly, we can easily work out the equations for $\gamma$ and $H$ given $p$ and $\theta$ (and $n_C$). Notice that these equations also give us a direct way to evaluate whether an edge $(i,j)$ is in a community \mfixed{$\gamma, H$}: we only need to solve $p$ and $\theta$ and evaluate if $(i+p)(j+p)\leq \theta$.

\paragraph{Constraints of parameters} Not every $\gamma$ and $H$ yield valid communities in \eqref{eq:p_given_h_gamma} and \eqref{eq:theta_given_h_gamma}, so we have to pay particular attention on the constraints of these parameters. Clearly both $\gamma$ and $H$ need to be nonnegative. 
%Further, we can restrict $H$ to be an integer, as in our application, we are only interested on the value of $(i + p)(j + p)$ in the nonnegative integer grid $\N\times\N$. 
As the hyperbola is monotonic, it must be that $H \leq \gamma$, and as the hyperbola is convex, it must be that $\gamma \leq (n_C - 1 + H)/2$.
%Notice that these conditions are necessary, but not sufficient, to make sure that there exists $p$ and $\theta$ satisfying \eqref{eq:gamma_def} and \eqref{eq:H_def}.

Recall that in Section~\ref{sec:model:hyperbola} we restricted $p$ to be nonnegative so that it cannot happen that $(0+p)(0+p) > \theta$ if $(i+p)(j+p) \leq \theta$ for some $(i,j)\in\N\times\N$. The motivation for this was that we want element $(0,0)$ to be included in every non-empty community. But we can relax the constraint $p\geq 0$ to
\begin{equation}
  \label{eq:constrain_p}
  p^2\leq\theta \Leftrightarrow p^2 \leq (\gamma + p)^2 \Leftrightarrow p \geq -\frac{\gamma}{2} \; ,
\end{equation}
assuming $\gamma > 0$. Here, the first equivalence follows by substituting~\eqref{eq:gamma_def} to $\theta$. Notice that the $p$ and $\gamma$ in this inequality are bound together via~\eqref{eq:p_given_h_gamma}. This constraint also implies the above constraint that $H \leq \gamma$. 

\subsection{A mixture of a line and a hyperbola}
\label{sec:model:mixture}

Given the understanding of the previous two models, we now introduce a third equivalent model. Here, we consider a mixture of two restricted models: a simple hyperbola where an edge $(i,j)$ is in the community if $i \cdot j \leq \Sigma'$ for some $\Sigma'\in\R$, and a simple linear model $i+ j \leq \Sigma''$, with $\Sigma''\in\R$. Notice that unlike above, here the hyperbola's centre is fixed to the origin. Alone neither of these models is very expressive, but a mixture of these two is much more powerful: an edge $(i, j)$ is in the community if
\begin{equation*}
  \label{eq:mixtureA}
  (1 - x)(i\cdot j) + x(i + j) \leq \Sigma
\end{equation*}
for some $x\in[0, 1]$ and $\Sigma\in\R$.
Indeed, to allow even more flexibility, one can also consider a second mixture, which combines the hyperbola with a negative linear model
\begin{equation*}
	\label{eq:mixtureB}
	(1 - x)(i\cdot j) + x(-i - j) \leq \Sigma
\end{equation*}
for some $x\in[0, 1]$. Combining  both of the above equations into a single model leads to our final definition: an edge $(i, j)$ is in the community if for some $x\in[-1, 1]$ and $\Sigma\in\R$
\begin{equation}
	\label{eq:mixture}
	(1 - |x|)(i\cdot j) + x(i + j) \leq \Sigma \; .
\end{equation}

The meaning of the parameters here is slightly different to the model \mhyperbolic{}: the parameter $\Sigma$ behaves similarly to $\theta$ in \eqref{eq:hyperbola}, moving the line (or the hyperbola) further from the origin, while the mixture parameter $x$ dictates how much the model looks like a line and how much like a hyperbola centered to the origin. We call this model \mmixture{$x, \Sigma$}.

\paragraph{Equivalence}
We will now turn our attention to the equivalence between \mhyperbolic{} and \mmixture{}. 
\begin{proposition}
	\label{prop:hyper_mix_equivalent}
	For any pair $(p, \theta)$ of valid parameters of the \mhyperbolic{} model, there is a pair $(x, \Sigma)$ of valid parameters of the \mmixture{} model that yield exactly the same community, and vice versa.  
\end{proposition}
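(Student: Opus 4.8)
The plan is to reduce both membership conditions to inequalities in the same two ``features'' $i\cdot j$ and $i+j$, and then to use the fact that two such inequalities cut out the same region exactly when their coefficient vectors are positive scalar multiples of one another. Expanding \eqref{eq:hyperbola} gives the equivalence $(i+p)(j+p)\le\theta \Leftrightarrow i\cdot j + p(i+j)\le \theta-p^2$, while \eqref{eq:mixture} reads $(1-|x|)(i\cdot j)+x(i+j)\le\Sigma$. Hence a hyperbolic pair $(p,\theta)$ and a mixture pair $(x,\Sigma)$ describe the same region precisely when the triples $(1,\,p,\,\theta-p^2)$ and $(1-|x|,\,x,\,\Sigma)$ agree up to a common positive factor. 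The bulk of the argument is then to solve this matching in both directions and to check that the resulting parameters are valid.

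For the forward direction I would, given valid $(p,\theta)$, scale the hyperbolic inequality by the positive constant $1/(1+|p|)$ and set
\[
  x=\frac{p}{1+|p|},\qquad \Sigma=\frac{\theta-p^2}{1+|p|}.
\]
A short computation gives $1-|x|=1/(1+|p|)>0$, so the two inequalities are genuinely positive multiples of each other and therefore define identical sets of integer points. Validity is immediate: $|x|=|p|/(1+|p|)<1$ automatically, and $\Sigma\ge 0$ follows from the constraint $p^2\le\theta$ in \eqref{eq:constrain_p} (equivalently, from requiring $(0,0)$ to lie in the community).

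For the reverse direction I would invert these relations. When $|x|<1$ one divides \eqref{eq:mixture} by $1-|x|>0$ and reads off
\[
  p=\frac{x}{1-|x|},\qquad \theta=p^2+\frac{\Sigma}{1-|x|}.
\]
Here validity is again the crux: the hyperbolic model requires $p^2\le\theta$ (equivalently the relaxed bound $p\ge-\gamma/2$ of \eqref{eq:constrain_p}), which holds iff $\Sigma\ge 0$, and $\Sigma\ge 0$ is exactly the condition that $(0,0)$ belongs to the mixture community. I would also note that $x<0$ yields $p<0$, so the negative part of the relaxed range is covered and no valid mixture community with $|x|<1$ is lost. One checks directly that these formulas invert those of the forward direction, giving a genuine bijection.

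The one place that needs care --- and the main obstacle --- is the boundary $|x|=1$, i.e.\ the pure-line communities $\pm(i+j)\le\Sigma$. The forward map never reaches this boundary (it always returns $|x|<1$), but the reverse map must account for it, and no finite $p$ realizes a pure line: dividing the hyperbolic inequality by $p$ shows the line is only the limit as $p\to\pm\infty$. I would resolve this by exploiting that a community is a \emph{finite} set of integer points in $[0,n_C-1]\times[0,n_C-1]$. Writing the hyperbola as $\tfrac{i\cdot j}{p}+(i+j)\le\Sigma$ with $\theta=p^2+p\Sigma$, the term $\tfrac{i\cdot j}{p}$ tends to $0$ uniformly over the (bounded) grid as $|p|\to\infty$, so for $|p|$ large enough every grid point is classified by the hyperbola exactly as by the line. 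The only subtlety is grid points lying \emph{on} the line, where the vanishing term breaks ties the wrong way; I would first perturb $\Sigma$ to a non-integer value (which leaves the integer community unchanged) so that no grid point sits on the boundary, and then take $|p|$ large. Making this ``eventual agreement'' precise is the only delicate step; the $x=-1$ case follows by the symmetric argument, and everything else is routine algebra.
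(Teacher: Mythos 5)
Your proposal is correct, and its core is exactly the paper's proof: the paper establishes the equivalence by scaling the expanded hyperbolic inequality by the positive factor $1/(1+|p|)$ and setting $x = p/(1+|p|)$ and $\Sigma = (\theta-p^2)/(1+|p|)$, which is precisely your forward map. Where you differ is in thoroughness. The paper stops after this forward computation and treats the ``vice versa'' as implicit, whereas you make the inverse map $p = x/(1-|x|)$, $\theta = p^2 + \Sigma/(1-|x|)$ explicit, tie the validity constraints together ($\Sigma \geq 0$ corresponding to $p^2 \leq \theta$, i.e., to $(0,0)$ belonging to the community), and, most substantively, you notice and repair the boundary case $|x|=1$, which the paper silently ignores: since the forward map always has $|x|<1$, a pure-line community $\pm(i+j)\leq\Sigma$ is not realized by any finite $p$. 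Your resolution is sound: a community is a finite set of integer grid points, so after perturbing $\Sigma$ off the integers (which changes nothing on the grid but removes boundary ties, which would otherwise break the wrong way), taking $|p|$ large makes the term $ij/p$ uniformly smaller than the slack, so the hyperbola and the line classify every grid point identically, and validity $\theta \geq p^2$ holds because $\Sigma \geq 0$. Two minor remarks: first, your opening claim that two inequalities in the features $(i\cdot j,\, i+j)$ cut out the same region \emph{precisely} when their coefficient triples are positive multiples of one another is false in the ``only when'' direction on a finite grid---your own $|x|=1$ argument exploits exactly this failure---but this is harmless since the proof only uses the sufficient direction; second, the paper never pins down what ``valid'' means for the mixture model, and your reading ($\Sigma\geq 0$, so that $(0,0)$ lies in the community, mirroring $p^2\leq\theta$) is the one under which the proposition is actually true, since mixture parameters with $\Sigma<0$ can define communities omitting $(0,0)$ that no valid hyperbola can express.
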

\begin{proof}
	We have\\[-4mm]
	\begin{equation}
	\label{eq:4}
	\begin{split}
	(i+p)(j+p) &\leq \theta \\
	\Leftrightarrow \frac{(i+j)p}{1+|p|} + \frac{ij}{1+|p|} &\leq \frac{\theta - p^2}{1 + |p|} \\
	\Leftrightarrow (i+j)\frac{p}{1+|p|} + ij\left(1 - \Big|\frac{p}{1+|p|}\Big|\right) &\leq \frac{\theta - p^2}{1+|p|} \; ,
	\end{split}
	\end{equation}
	where the first equivalence is by expanding the left-hand side and re-arranging the terms and the second is by noting that
	$$\frac{1}{1+|p|}=\frac{1+|p|-|p|}{1+|p|}= 1 - \frac{|p|}{1+|p|} = 1 - \Big|\frac{p}{1+|p|}\Big| $$
	If we now write $x = p/(1+|p|)$ and $\Sigma = (\theta - p^2)/(1+|p|)$, we get
	\begin{equation}
	\label{eq:5}
	(i+j)x + ij(1-|x|) \leq \Sigma\; ,
	\end{equation}
	concluding the proof. 
\end{proof}

\subsection{Generalization of Quasi-Cliques and HyCom}\label{sec:hycom}

An important consideration in our model(s) is that we want to be able to generalize existing models.

\paragraph{Quasi-cliques}
 Clearly, quasi-cliques are a special case of our model.
 Using  the \mfixed{} model, we can simply set $H$ and $\gamma$ to $n_C$.
 
\paragraph{HyCom}
In \cite{araujo2014beyond}, a community has been defined as follows: Given $\alpha<0$ and $\tau \in \R$, all edges $(u,v)$ that fulfill
\begin{equation}
	\label{eq:mixtureHycom}
	u^\alpha\cdot v^\alpha  \geq \tau
\end{equation}
are part of the community. Note that in \cite{araujo2014beyond}, one-based indexing is used, i.e.\ the indices of nodes $u,v$ start with 1. Thus, using our zero-based indexing w.r.t.\ $i,j$, \eqref{eq:mixtureHycom} is equivalent to
\begin{equation}
\label{eq:mixtureHycom2}
(i+1)^\alpha\cdot (j+1)^\alpha  \geq \tau
\,\Leftrightarrow\,
0.5\cdot (i\cdot j) + 0.5\cdot (i+j) \leq \tau' \; .
\end{equation}
Here, we set  $\tau'=0.5\cdot (\tau^{1/\alpha}-1)$ and exploited that $\alpha < 0$.

Using our \mmixture{} model, it is now obvious that \cite{araujo2014beyond} is a special case: it corresponds to \mmixture{$0.5, \tau'$}. Indeed, while at first sight \eqref{eq:mixtureHycom} seems to have two degrees of freedom, it only has one.  The parameter $\tau'$ (in our notation: $\Sigma$) can be adapted, the parameter $x$ is fixed to $0.5$. 

%\begin{figure}[tb]
%	\centering
%	\subfloat[multiple lines; different sigmas]{
%		\centering
%		\includegraphics[width=0.475 \linewidth]{}
%		\label{fig:exampleX0}
%	}%
%	\subfloat[multiple x values only one sigma]{
%		\centering
%		\includegraphics[width=0.475 \linewidth]{}
%		\label{fig:exampleXnot0}
%	}
%	\caption{\rem{add captions} we generalize Hycom. The highlighted line refers to $x=0.5$, $\Sigma=200$ in both plots}
%	\label{fig:exampleGeneralization}
%\end{figure}

This restriction in the HyCom model limits the possible shapes of the communities significantly since communities with $x\neq0.5$ can not be represented.  % as shown in Fig.~\ref{fig:exampleGeneralization}. Using $x=0.5$, shapes as illustrated in Fig.~\ref{fig:exampleX0} can be described. In contrast, communities as shown in Fig.~\ref{fig:exampleXnot0} with $x\neq0.5$ can not be represented by HyCom. 
Indeed, as we will show in the experiments, many real world datasets contain communities with $x$ not close to $0.5$ (see Figure~\ref{fig:distx}); thus, they are much better represented by our model. Furthermore, the HyCom model does not give us intuitive interpretation of the shape of the community as the exponent $\alpha$ can be switched to any other exponent by adjusting $\tau$ accordingly.

%%% Local Variables:
%%% mode: latex
%%% TeX-master: "paper"
%%% End:

\section{Model for the Full Graph}
\label{sec:full-graph}
While the previous section focused on modelling individual communities, we now introduce a principle for describing the full graph containing multiple communities. When multiple communities are present, we might observe overlapping groups. When communities are modelled as quasi-cliques, there is only one type of overlap we need to consider: if the nodes of two communities overlap, so do their (implicit) edges. With our community models, however, we have to distinguish three types of overlapping behaviour: two communities $C$ and $D$ are \emph{node-disjoint} if they do not share any nodes ($V_C \cap V_D = \emptyset$); \emph{area-disjoint} (but \emph{node-overlapping}) if they do share nodes but no (implicit) edges ($V_C \cap V_D \neq \emptyset$ but $A_C \cap A_D = \emptyset$); and \emph{area-overlapping} (or \emph{overlapping} for short) if also their (implicit) edges overlap ($A_C \cap A_D \neq \emptyset$). 

%For the purposes of this work, we will not allow area-overlapping communities. Though, instead of enforcing this a priori, we solve it a posteriori: in our model for the full graph, we assign every potential edge $(i,j)\in V\times V$ to \emph{at most} one community. 

Area-overlapping communities present a particular challenge to the modelling as we have to assign a likelihood to every (implicit) edge. In this work we assign each edge to at most one community and the likelihood of the edge is calculated using only that community. 

For defining the quality of a set of communities, we refer to a probabilistic approach, i.e.\ we aim to find the set of models $\col{C}$ leading to the highest \emph{likelihood} of the input graph $G$. For this purpose, notice that in real graphs the communities rarely have full density (i.e.\ $\abs{E_C}\neq\abs{A_C}$), and that the density varies between the communities.

We use the model from Section~\ref{sec:model:general} as the basis for modelling a community. That is, the density of a community $C$, $d_C$, is defined as in~\eqref{eq:d_C}, with every community having its own density. To define the outside density $d_O$, we have to consider not only the `outside area' of a single community, but the whole area of the graph that does not belong to any community. 
%
%Our probabilistic model assumes each edge to be drawn from a Bernoulli distribution $a_{i,j} \sim Bernoulli(p_k)$, where $k=1,\ldots,\abs{\col{C}}$ is the community the edge belongs to -- or  $k=0$ if the edge belongs to no community. The higher $p_k$, the higher the density of the community. 
%
%
%Given a set of communities, the maximum likelihood solutions of the parameters $p_k$ can be computed via the following densities:
%We define the \emph{density} of a community $C$ as the ratio 
%\[
%d_C = \abs{E_C}/\abs{A_C} \; .
%\] 
%Likewise, the \emph{outside density} $d_O$ is the ratio of the number of edges that do not belong to any community w.r.t. the area that does not belong to any community. 
If we let $A_{\col{C}} = \cup_{C\in\col{C}} A_C$ be the area that belongs to the communities and let $\complement{A_{\col{C}}} = (V\times V) \setminus A_{\col{C}}$ be its complement, then 
\[
d_O = \abs{E \setminus A_{\col{C}}}/\abs{\complement{A_{\col{C}}}}\; .
\] 

%Based on the property of the Bernoulli distribution, the above densities correspond to the maximum likelihood solutions of the variables $p_k$.
%Therefore, the overall likelihood  $L(G\mid \col{C})$ can now be defined as follows: The likelihood of an edge that is in community $C$ is $d_c$; the likelihood of a pair $(i,j)$ that is in the area of $C$ but that is not an edge of $G$ is $1-d_C$; the likelihood of an edge that is not in any community is $d_O$; and the likelihood of a pair $(i ,j)$ that is not in any community's area and is not an edge of $G$ is $1 - d_O$. We get

We can now model the full graph similarly to how we modelled a single community to obtain the overall likelihood $L(G\mid \col{C})$ of a graph $G$ given the set of communities $\col{C}$.

\begin{definition}
  Given a graph $G$ and a collection of its communities $\col{C}$, where every edge belongs to at most one community, the \emph{log-likelihood} $\log L(G\mid \col{C})$ is defined as 
  \begin{equation}
    \label{eq:loglikelihood}
    \begin{split}
      \log L(G\mid \col{C}) &= 
      \sum_{C\in\col C} \bigl(\abs{E_C}\log(d_C) + \abs*{\complement{E_C}}\log(1-d_C)\bigr) \\
      &\quad + \abs{E \setminus A_{\col{C}}}\log(d_O) \\
      &\quad + (\abs{\complement{A_{\col{C}}}} - \abs{E \setminus A_{\col{C}}})\log(1 - d_O) \; .
    \end{split}
  \end{equation}
\end{definition}

% $E_{\col{C}} = \cup_{C\in\col{C}} E_C$
% $A_{\col{C}} = \cup_{C\in\col{C}} A_C$
% $\complement{E_C} = A_C\setminus E$
% $\complement{A_{\col{C}}} = (V\times V) \setminus A_{\col{C}}$

% \begin{multline}
%   \label{eq:LL}
%   \sum_{C\in\col C} \left(\abs{E_C}\log\frac{\abs{E_C}}{\abs{A_C}} + \abs{\complement{E_C}}\log\frac{\abs{\complement{E_C}}}{\abs{A_C}}\right) \\
%   + \abs{E \setminus A_{\col{C}}}\log\frac{E \setminus A_{\col{C}}}{\abs{\complement{A_{\col{C}}}}} \\
%   + (\abs{\complement{A_{\col{C}}}} - \abs{E \setminus A_{\col{C}}})\log\frac{\abs{\complement{A_{\col{C}}}} - \abs{E \setminus A_{\col{C}}}}{\abs{\complement{A_{\col{C}}}}}
% \end{multline}

%%% Local Variables:
%%% mode: latex
%%% TeX-master: "paper"
%%% End:

\section{Computational Complexity}
\label{sec:complexity}

Before we present our algorithms, let us briefly study the computational complexity of the problems related to the modelling. Instead of dealing directly with the likelihood, in this section our target is to minimize the number of non-edges inside the communities while simultaneously maximizing the number of non-edges outside (i.e.\ to maximize the community density while minimizing the outside-area density). This is a natural surrogate for the likelihood that allows us to avoid some issues in the analysis caused by the likelihood function (e.g.\ that the likelihood model is oblivious to the `inside' and `outside': very sparse communities with dense outside area are also good models in likelihood's sense). 

We will first consider problems involving only a single community, showing that finding the node sets for communities is hard in our model:

\begin{proposition}
  \label{prop:1com:fixed_param_hard} Given a graph $G=(V, E)$ and a pair of parameters $(p, \theta)$, it is \NP-hard to find 
  \begin{itemize}
  \item the largest set of nodes $V_C\subset V$ and a permutation $\pi_C: V_C\to \{0,\ldots,\abs{V_C}-1\}$ such that the area $A_C$ defined by $V_C$, $\pi_C$, and \mhyperbolic{$p, \theta$} is \emph{exact}, that is $A_C = E_C$;
  \item the set of nodes $V_C\subset V$ and a permutation $\pi_C : V_C\to \{0,\ldots,\abs{V_C}-1\}$ such that $d_C\geq c$ for some given constant $c\in(0,1)$ and $d_O$ is minimized.
  \end{itemize}
\end{proposition}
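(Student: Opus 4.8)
The plan is to reduce from \textsc{Clique} (which is \NP-complete) for both items. The key enabling observation is that for any \emph{fixed} pair $(p,\theta)$ the area $A_C$ is a lower-set (staircase) region of the $n_C\times n_C$ grid whose shape, and in particular its size $\abs{A_C}$, depends only on $n_C$ and not on the permutation $\pi_C$. This lets me control the geometry of $A_C$ entirely through the one free quantity $n_C=\abs{V_C}$.

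For the first item I would make the hyperbola swallow the whole grid. Given a \textsc{Clique} instance $(H,k)$ with $\abs{V_H}=N$, set $G=H$ and choose $\theta\ge (N-1+p)^2$ (e.g.\ $p=0$, $\theta=(N-1)^2$). Then for every candidate $V_C\subseteq V$ we have $n_C\le N$, so even the corner $(n_C-1,n_C-1)$ lies under the curve and $A_C=V_C\times V_C$ irrespective of $\pi_C$. Consequently $A_C=E_C$ holds iff every pair of $V_C$ is an edge, i.e.\ iff $V_C$ is a clique, so the largest exact community equals a maximum clique of $H$. The diagonal pairs $(i,i)$ are handled by the standing convention that vertices carry self-loops (or by excluding the diagonal from the exactness test); this only shifts counts by $O(n_C)$ and does not affect the argument. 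Hence computing the largest exact community solves \textsc{Max-Clique}.

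For the second item I would keep the full-grid parameter choice, so that again $A_C=V_C\times V_C$ and the permutation is irrelevant, and read $d_O$ against the whole-graph complement $(V\times V)\setminus A_C$ in the spirit of the objective stated at the start of Section~\ref{sec:complexity} and of the model in Section~\ref{sec:full-graph} (the within-$V_C$ complement would make the objective size-insensitive and trivially satisfiable, so it is the whole-graph reading that carries content). Then $d_C=\abs{E_C}/\abs{A_C}$ is exactly the edge density of the subgraph induced by $V_C$, while $d_O$ measures the edges left outside the chosen block. Minimising $d_O$ therefore rewards covering as many edges as possible inside a block that is still $c$-dense: choosing $c$ close to $1$ forces $V_C$ to induce a (near\nobreakdash-)clique, and among near-cliques the one leaving the fewest edges uncovered is the largest, so calibrating a target threshold on $d_O$ turns ``$d_O$ small'' into ``$H$ has a clique of size $\ge k$.''

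The main obstacle is precisely this second item: since $c\in(0,1)$ is a fixed constant it cannot force an \emph{exact} clique, because a single missing edge in a large block still leaves $d_C\ge c$. I would close this gap by padding -- blowing up each vertex of $H$ into a large independent cloud (or attaching a sparse gadget) so that any $c$-dense block must draw almost all of its mass from mutually adjacent original vertices, making the clique structure the only way to meet $d_C\ge c$ and simultaneously drive $d_O$ below the threshold. Checking that this density-gap argument is robust would be the delicate step; by contrast, the full-grid choice neutralises the permutation and keeps both reductions transparent, which is why I would adopt it rather than exploit a non-degenerate hyperbola whose staircase shape would force me to reason about optimal vertex orderings as well.
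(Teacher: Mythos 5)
Your first item is essentially the paper's own argument: choose $(p,\theta)$ so that the hyperbola covers the whole $n_C\times n_C$ grid (the clique special case), observe that the permutation then becomes irrelevant, and conclude that the largest exact community is exactly a maximum clique, giving \NP-hardness via \textsc{Max-Clique}~\cite[Problem GT19]{garey79computers}. Your explicit remarks about the diagonal and about $\abs{A_C}$ depending only on $n_C$ are details the paper glosses over, but the substance is identical.

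For the second item, however, there is a genuine gap, and you name it yourself: ``checking that this density-gap argument is robust would be the delicate step.'' That delicate step is the entire technical content of the claim. A reduction from \textsc{Clique} via blowing vertices up into independent clouds must rule out $c$-dense blocks that mix cloud vertices with partial clique structure, and must additionally calibrate the $d_O$ threshold so that ``few uncovered edges'' certifies a clique of size $\geq k$ in the original graph; neither is carried out, and neither is routine. The paper sidesteps all of this: under the same full-grid parameter choice, the constraint $d_C \geq c$ makes $V_C$ precisely a $c$-quasi-clique, so the second problem \emph{is} the maximum $c$-quasi-clique problem, whose \NP-hardness for fixed $c\in(0,1)$ is already established in the literature~\cite{pattillo13maximum}. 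The missing idea is thus not a cleverer gadget but the recognition that known hardness of quasi-clique detection can be invoked directly --- that hardness is itself the subject of a dedicated paper, which is a strong hint that re-deriving it by a padding construction would not be a short or safe step. (One point where you are more careful than the paper: you are right that $d_O$ must be read against the whole-graph complement $(V\times V)\setminus A_C$ for the objective to be non-degenerate; the paper is silent on this and implicitly relies on the same reading.)
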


\begin{proof}
  These results follow from the fact that the clique is a special case of our model. Hence, if $(p, \theta)$ are set so that they encode a clique, the first case is equivalent to the well-known \NP-hard problem of finding the largest clique~\cite[Problem GT19]{garey79computers}, while the second is equivalent to the problem of finding the maximum $c$-quasi-clique, which is also \NP-hard~\cite{pattillo13maximum}.
\end{proof}

% On the other hand, if we are given the nodes (and their ordering), finding the best model for a single community is rather easy:

% \begin{proposition}
%   \label{prop:1com:fixed_nodes_easy}
%   Given a graph $G=(V, E)$, a set of vertices $V_C\subseteq V$, and a permutation $\pi_C: V_C\to \{0,\ldots,\abs{V_C}-1\}$, finding the pair of parameters $(p, \theta)$ that maximizes the function $d_C + (1 - d_O)$ (or the likelihood of~\eqref{eq:comm_LL}) is doable in time polynomial to $\abs{V_C}$.
% \end{proposition}

% \begin{proof}
%   We utilize here the $(\gamma, H)$ parameterization, which is equivalent to the $(p, \theta)$ parameterization. As $\gamma$ and $H$ are nonnegative integers bounded by $\abs{V_C}$, there are at most $\abs{V_C}^2$ different configurations of them. We can simply try every configuration exhaustively to find the optimal solution.
% \end{proof}

% We use Proposition~\ref{prop:1com:fixed_nodes_easy} in the next section to design our algorithms. Importantly, we will show that we do not need to try all the $\abs{V_C}^2$ different configurations.

Let us now turn our attention to the case where we are already given a collection $\col{C}$ of communities (with fitted models), and we want to find a subcollection $\col{S}\subseteq\col{C}$ that minimizes the number of edges in the outside area plus the number of non-edges inside the communities. That is, we want to minimize 
\begin{equation}
  \label{eq:min_err}
  \abs{E \cap \complement{A_{\col{S}}}} + \abs{A_{\col{S}} \setminus E}  \; .
\end{equation}

For these results, we use the general framework of Miettinen~\cite{miettinen15generalized}. First note, that our communities are (symmetric) generalized rank-1 matrices in the sense of Miettinen: the functions $f$ of our models define the outer product in Definition 1 of \cite{miettinen15generalized}, while the adjacency matrix is the data matrix. For this problem, we only care about the union of the areas of the communities, and consequently, we take the element-wise disjunction of the matrices representing their areas. Propositions 6 and 10 of \cite{miettinen15generalized} directly provide the following results:
\begin{proposition}
  \label{prop:sub_comm_hard}
  Given a graph $G=(V, E)$ and a collection $\col{C}$ of communities of $G$, 
  \begin{itemize}
  \item it is \NP-hard to find the subcollection $\col{S}\subseteq\col{C}$ that minimizes \eqref{eq:min_err};
  \item it is \NP-hard to approximate the error \eqref{eq:min_err} to within a factor of $\Omega\left(2^{\log^{1-\varepsilon}\abs{V}}\right)$ and quasi-\NP-hard to approximate it within $\Omega\left(2^{(4\log\abs{\col{C}})^{1-\varepsilon}}\right)$ for any $\varepsilon > 0$;
  \item the error \eqref{eq:min_err} can be approximated to within a factor of $2\sqrt{(\abs{\col{C}} + \abs{V})\log\abs{V}}$ in polynomial time.
  \end{itemize}
\end{proposition}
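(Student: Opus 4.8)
The plan is to recognize the subcollection-selection problem as a special case of the generalized Boolean matrix factorization framework of \cite{miettinen15generalized} and then read off its results essentially verbatim. First I would make the correspondence between communities and binary matrices precise. For each community $C\in\col C$, define its \emph{area matrix} $\mM_C\in\B^{n\times n}$ by $(\mM_C)_{ij}=1$ iff $(i,j)\in A_C$. Using the \mhyperbolic{} view, the test $(i+p)(j+p)\le\theta$ is monotone in each coordinate, so every row of $\mM_C$ (in the $\pi_C$ order) is a prefix of ones and $\mM_C$ is a nested/staircase matrix; such a matrix has rounding rank $1$, i.e.\ it is exactly a (symmetric) generalized rank-$1$ matrix in the sense of Definition~1 of \cite{miettinen15generalized}, with the shifted positions $i+p$ as the factor entries, real multiplication as the product, and the comparison against $\theta$ as the rounding that produces the Boolean output. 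Symmetry is immediate from $f(i,j,\Theta_C)=f(j,i,\Theta_C)$.

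Next I would identify the objective \eqref{eq:min_err} with the reconstruction error used in the framework. Because we only care about the union of the selected areas, the components are combined by element-wise disjunction, so the reconstructed matrix is $\bigvee_{C\in\col S}\mM_C$, whose support is precisely $A_{\col S}$. The term $\abs{E\cap\complement{A_{\col S}}}$ counts the ones of $\mA$ left uncovered by this disjunction, and $\abs{A_{\col S}\setminus E}$ counts the zeros of $\mA$ that it erroneously covers; their sum is exactly the Hamming distance between $\bigvee_{C\in\col S}\mM_C$ and $\mA$. Thus minimizing \eqref{eq:min_err} over $\col S\subseteq\col C$ is the problem of selecting a subset of given generalized rank-$1$ components so that their Boolean sum is closest, in Hamming distance, to the data matrix $\mA$.

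With this identification, the three claims are read off from the framework. Proposition~6 of \cite{miettinen15generalized} yields the \NP-hardness of the exact problem together with the two inapproximability bounds, the factor $\Omega(2^{\log^{1-\varepsilon}\abs V})$ and the quasi-\NP-hard factor $\Omega(2^{(4\log\abs{\col C})^{1-\varepsilon}})$, both of which trace back to the hardness of Red-Blue Set Cover; Proposition~10 provides the matching polynomial-time algorithm attaining the factor $2\sqrt{(\abs{\col C}+\abs V)\log\abs V}$.

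The main obstacle is not a computation but verifying that the translation lands inside the hypotheses of the cited propositions. Concretely, I would check (i) that the rounded real product is an admissible generalized outer product, so that each $\mM_C$ is a legitimate rank-$1$ component; (ii) that the combiner assumed in Propositions~6 and 10 is Boolean OR and the error is Hamming distance, matching our setting; and (iii) the symmetry subtlety: the framework is stated for arbitrary matrices, so for the hardness direction I would confirm the reduction can be realized with symmetric candidate components and a symmetric data matrix (which holds, since $\mA$ and each $\mM_C$ are symmetric), while for the approximation direction symmetry is harmless because a general-matrix algorithm runs unchanged on symmetric inputs. Once these points are settled, Propositions~6 and 10 transfer directly.
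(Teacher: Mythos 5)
Your proposal takes exactly the same route as the paper: casting each community's area as a (symmetric) generalized rank-1 matrix in the framework of \cite{miettinen15generalized}, combining selected components by element-wise disjunction so that \eqref{eq:min_err} becomes the Hamming-distance reconstruction error, and then invoking Propositions~6 and~10 of that work for the hardness, inapproximability, and approximation claims. Your write-up is in fact more detailed than the paper's (which states this correspondence in a single paragraph), and the extra verification points you list are sensible but raise no issues beyond what the paper implicitly assumes.
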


The situation of Proposition~\ref{prop:sub_comm_hard} can easily arise as the consequence of the following simple idea of finding the communities: first, find many subset of nodes (e.g.\ by sampling or by enumerating all dense subgraphs), then fit the community models to them, and then select the final subset of communities from the potentially highly redundant set of communities. As Proposition~\ref{prop:sub_comm_hard} shows, the last step of this approach is computationally hard and hence we do not use it.

%%% Local Variables:
%%% mode: latex
%%% TeX-master: "paper"
%%% End:

\section{Algorithms}
\label{sec:algorithms}

Next we present an algorithm for fitting our model to a graph. We assume the input is  the graph and an initial collection of sets of nodes that represent initial communities. These initial node sets can be found using any existing community-detection algorithm, e.g.\ HyCom \cite{araujo2014beyond}.
We will first present the algorithm to fit the model to a single community, and then explain how to use that to fit the model for the full graph.

\subsection{Modelling a single community}

To model a single community, we use the \mfixed{} model with \emph{integer} parameters $\gamma$ and $H$. Given the intuition from Figure~\ref{fig:fixed_example}, this restriction is natural. We will also not lose too much, as the following proposition demonstrates:
\begin{proposition}
  \label{prop:area_with_varying_gamma} Let $C = \mfixed{\gamma, H}$ be a community of $n_C$ nodes defined by $\gamma\in\R$ and $H\in\N$, and let $A_C$ be its area. Then there exists integer $\gamma'$ such that if $D = \mfixed{\gamma', H}$ is the community defined by $\gamma'$ and $H$, and $A_D$ is the respective area, then $\abs{A_C - A_D} \in \Theta(\gamma \ln(n_C))$.
\end{proposition}

In other words, the difference in area between integer and non-integer $\gamma$ grows only linearly with $\gamma$ and logarithmically with the number of nodes in the community. The proof of Proposition~\ref{prop:area_with_varying_gamma} is postponed to the appendix.

Constraining ourselves to integer parameters would not alone solve much, as there still are $O(n_C^2)$ parameter configurations to study. Many of these configurations, however, can be pruned out as they would lead to infeasible communities and the pruned search space is small enough for exhaustive search.

%thanks to this pruning, we can use exhaustive search on the parameter space.

%For \mhyperbolic{} and \mmixture{} we would first need to find out which intervals of parameters yield the same model so we could test one representative combination of each interval.

To gain intuition on how much of the search space the constraints remove, let us consider Figure~\ref{fig:feasible_parameters}. The area of the plot is the area of all possible combinations of $\gamma$ and $H$ for some community size $n_C$. The yellow area can be ignored as in that area $H > \gamma$. But also both of the green areas can be ignored, as in those areas, the constraint $p \geq -\gamma/2$ is violated (see \eqref{eq:constrain_p}). This pruning significantly reduces the different parameter configurations we need to test in the exhaustive search. 

\begin{figure}[tb]
	\centering
	\includegraphics[width=\singlefigwidth]{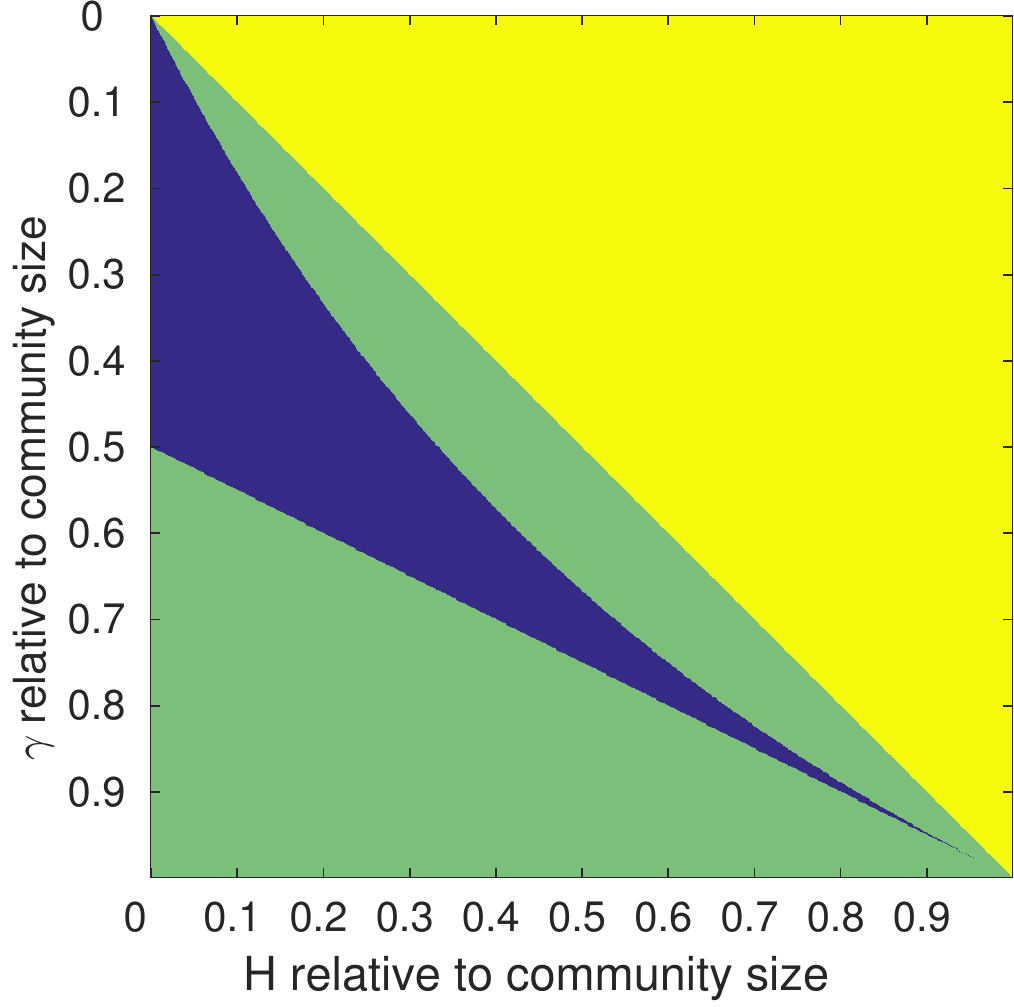}
	\caption{The blue area shows feasible values for parameters $\gamma$ and $H$ relative to a given community size. The green and yellow area are infeasible. The yellow part violates the trivial requirement of $H\leq\gamma$ and the green areas violate the condition $p \geq -\frac{\gamma}{2}$, where $p$ is given by Equation \eqref{eq:p_given_h_gamma}. }
	\label{fig:feasible_parameters}
\end{figure}

\paragraph{Likelihood-Computation} Given a pair of parameters $(\gamma, H)$, we need to evaluate its fit by computing the log-likelihood of the resulting model. According to the model proposed in Section~\ref{sec:model:general}, this requires to determine the area inside and outside the current community as well as the corresponding number of edges (and non-edges) in these areas.
%as well as in how many positions inside the community we expect an edge but observe none and in how many positions outside the community we expect no edge but observe one. 
Obviously, testing each position in the community is not a practical solution since it would lead to a running time quadratic in the community's size; instead  we derive a solution which is linear in the number of edges.

The computation of the area can be done in time linear in the number of nodes by referring to the functional form of the hyperbola, i.e.\ evaluating 
$
i = \frac{\theta}{j+p} - p
$
for each column $j$. Here we can compute $p$ and $\theta$ from $\gamma$ and $H$ using \eqref{eq:p_given_h_gamma} and \eqref{eq:theta_given_h_gamma}. Alternatively, we can approximate the area in constant time by taking the integral of this function from $0$ to $n_C -1$. Counting how many edges are inside the community requires a pass over the edges. Thus, this step dominates the time complexity.

It is easy to optimize this procedure further: First, we can compute the area faster by noticing that at the bottom we have a rectangle of size \by{n_C}{H}. Second, when we test a succession of parameter values, we can re-use part of the information about the edges: by increasing the values of $H$ or $\gamma$ only edges previously outside the community need to be evaluated. All remaining edges will still be located within the community.

\subsection{Modelling a full graph}
\label{sec:alg:full}
To obtain a model for a graph consisting of multiple potentially overlapping communities, we aim to optimize the log-likelihood $L(G\mid\col{C})$ for the full graph (i.e.~\eqref{eq:loglikelihood}). Our main problem is to determine how to deal with overlapping communities; indeed, if there are no node-overlapping communities, we can simply optimize every community separately using the above approach. If the communities do overlap, however, we do need to decide the order of the communities so that we can assign every edge to at most one community. 

As computing the log-likelihood for the full graph for each possible order of the communities is infeasible, we optimize the log-likelihoods of each community individually following an alternating optimization strategy. When optimizing one community, we keep track of the area that was already covered by other communities, and ignore that area in the computations of the subsequent communities.
In concordance with the log-likelihood we want to optimize for, we consider the global density for the whole outside community area (see Section~\ref{sec:full-graph}) in the log-likelihood computation of the model, instead of just the local outside density, as in Section~\ref{sec:model:general}.

The algorithm, shown as Algorithm~\ref{alg:full_graph}, comprises an initialization phase and an update phase. In the initialization phase of the algorithm (lines \ref{alg:full:init_start}--\ref{alg:full:init_stop}), we compute an individual model for each community, leaving out those edges that have already been covered in a previous step by another community. Note that during this step, each community uses its individual outside density. Next, we order the obtained models by their log-likelihood starting with the best and we compute the global outside density $d_O$ for the further updates.

Now that we have established an order of the communities, the alternating optimization starts (line~\ref{alg:full:iter_start}): Each time a community $C_i$ is selected and a new model is fit to it -- now not only excluding edges already covered in previous communities but also using the global outside density to determine the true log-likelihood for each community. After fitting the new model, we update the global outside density (line~\ref{alg:full:update_d_O}) if the new model is different to the old one. %We only update $d_O$ here, and not after every update, because updating it takes some time and the effects an individual community has to $d_O$ are minuscule.

All communities that have node overlap with $C_i$ are marked: due to the update of $C_i$ also the parameters of the overlapping communities might change. Thus, we mark the communities that overlap with $C_i$ for a re-update (line~\ref{alg:full:flag}).
We iterate over this process of updating the community models until there are no more communities to be updated.
%This process repeats until no further improvements in the log-likelihood can be made.   

The output of this algorithm is a list of models for all communities, ordered by their log-likelihoods.

\begin{algorithm}[tb]\small
  \caption{Algorithm to fit the \mfixed{} model to a graph.}\label{alg:full_graph}
  \begin{algorithmic}[1]
    \Input Undirected graph $G=(V, E)$, a collection of sets of nodes $\col{V} = \{V_i\subset V\}$ describing the initial communities
    \Output Ordered set of communities $\col{C}$
    %\Procedure{FitFullGraph}{$G, \col{V}$}
    \For{every set $V_i\in\col{V}$} \label{alg:full:init_start}
      \State $C \gets $ best model describing $G|_{V_i}$
      \State $\col{C}\gets \col{C} \cup \{C\}$
    \EndFor
    \State Order $\col{C}$ based on the likelihoods of the communities 
    \State Compute the global outside density $d_O$ \label{alg:full:init_stop}
    \State $\col{F}\gets \col{C}$
    \Repeat \label{alg:full:iter_start}
      \State $\col{T}\gets\col{F}$; $\col{F}\gets\emptyset$; $M\gets \emptyset$
      \ForAll{$C \in \col{T}$} \Comment{In decreasing likelihood}
        \State Update the model of $C$ ignoring areas in $M$
        \State $M \gets M \cup A_C$ \label{alg:full:area}
        \If{the likelihood of $C$ improved}
          \State Update $d_O$ \label{alg:full:update_d_O}
          \State $\col{F} \gets \col{F}\cup \{D\in\col{C} : V_D \cap V_C \neq \emptyset\}$ \label{alg:full:flag}
          \State Update the position of $C$ in $\col{C}$
        \EndIf
      \EndFor
    \Until{$\col{F} = \emptyset$}
    \State \textbf{return} $\col{C}$
    %\EndProcedure
  \end{algorithmic}
\end{algorithm}

%%% Local Variables:
%%% mode: latex
%%% TeX-master: "paper"
%%% End:

\section{Experiments}
\label{sec:experiments}
We divide our experiments into two groups. In the first group (Section~\ref{sec:experiments:obtainedModels} and~\ref{sec:comparisonBlockHyCom}), we use graphs with human-annotated communities and our goal is to study the shape of these communities, and the differences between the block/HyCom models and our model. In the second group (Section~\ref{sec:experiments:findingCommunities}) of experiments, we use existing community-detection algorithms to find the communities, and then apply our model to the found communities. The source code for our programs and the scripts to run the experiments are available online.\!\footnote{\url{http://people.mpi-inf.mpg.de/~pmiettin/hybobo/}}

\subsection{Obtained models}\label{sec:experiments:obtainedModels}
We start by studying the results of fitting our model to graphs where ground-truth communities are given. We used the Stanford Large Network Dataset collection~\cite{leskovec14snap}, which offers datasets with known communities. Table~\ref{tab:datasets} provides a summary of the employed datasets. As neither very small nor very large communities are particularly interesting, we restrict our analysis to communities with between 100 and 1000 nodes (inclusive). We use a sample of 500 communities from each of these data sets (for \dblp we also use a sample of 100 communities, see Section~\ref{sec:comparisonBlockHyCom}). Notice that the ground-truth information for the communities is only provided for the nodes and not for the edges. 

\begin{table}[t]
	\caption{Sizes of the datasets used for evaluation, their total number of communities as well as the number of communities of size 100 to 1000, and the time it took to determine a hyperbolic model for a sample of size 500.}
	\label{tab:datasets}
	\centering
	\begin{tabular}{@{}l@{\hspace{0.4em}}r@{\hspace{0.6em}}r@{\hspace{0.6em}}r@{\hspace{0.0em}}r@{\hspace{0.2em}}r@{}}
		\toprule
		& \multicolumn{1}{c}{nodes} &  \multicolumn{1}{c}{edges} & \multicolumn{2}{c}{communities} & \multicolumn{1}{c}{time} \\ 																												
	\cmidrule(rl){4-5}
			&  &   & \multicolumn{1}{c}{all} & \multicolumn{1}{c}{100--1000}&\multicolumn{1}{c}{(h)}\\ 
			\midrule
		\amazon & 334,863 &	925,872 &	75,149& 1,380& 0.6 \\ 
		\dblp & 317,080 &	1,049,866 &	13,477 & 805& 27.0\\ 
		\friendster & 65,608,366 &	1,806,067,135 &	957,154 & 19,763& 12.3\\ 
		\lj &  3,997,962 &	34,681,189 &	287,512 & 8,769& 11.3\\ 
		\orkut &  3,072,441 &	117,185,083 &	6,288,363 & 80,251& 3.1 \\ 
		\youtube &  1,134,890 &	2,987,624 &	8,385 & 129& 0.8\\ 
		\bottomrule 
	\end{tabular}
\end{table}

\begin{figure*}[tb]
	\centering
	\subfloat[Distribution of $\gamma$.]{
		\centering
		\includegraphics[width=0.3\textwidth]{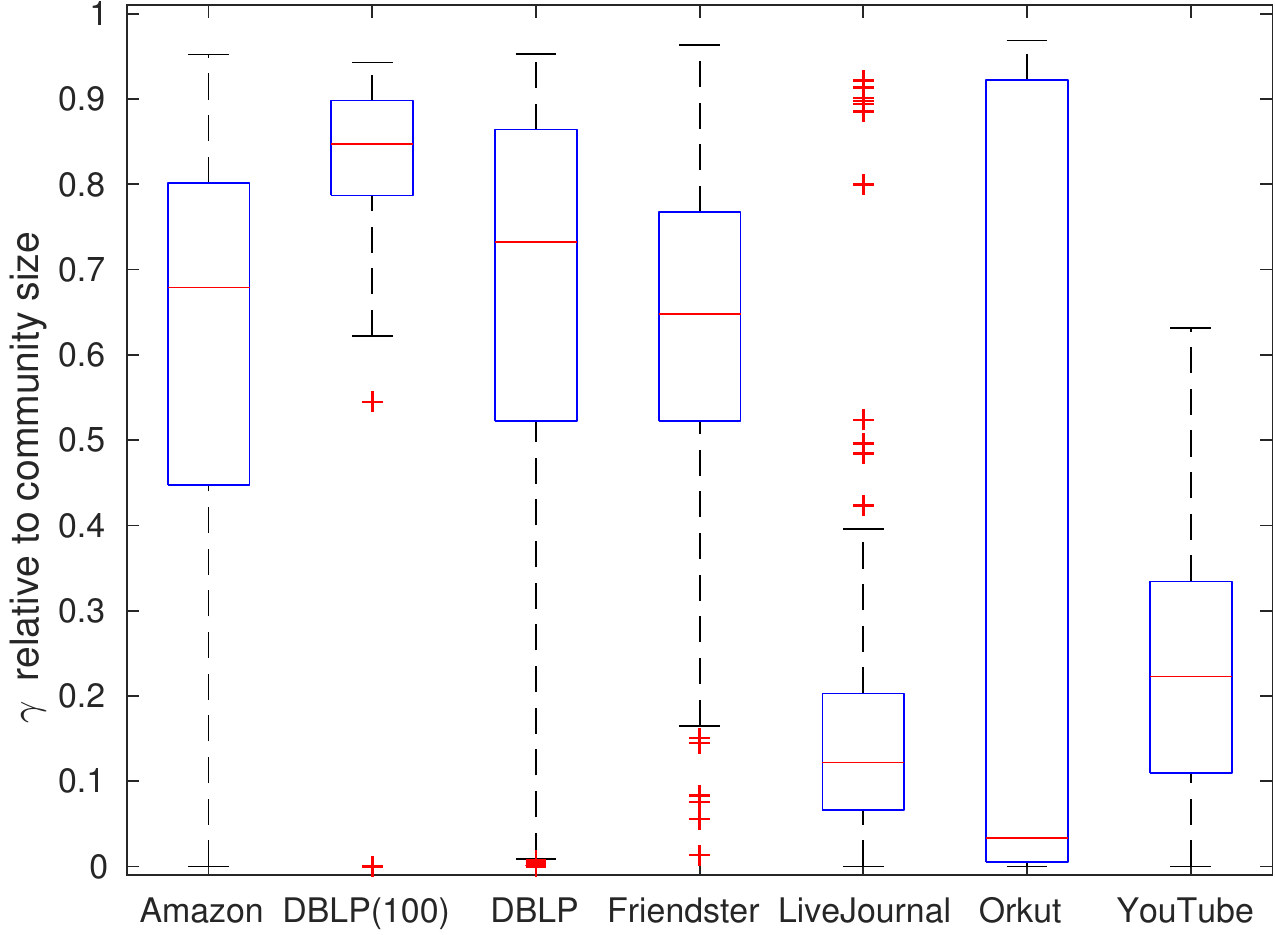}
		\label{fig:distgamma}
	}
	\subfloat[Distribution of H.]{	
		\centering
		\includegraphics[width=0.3\textwidth]{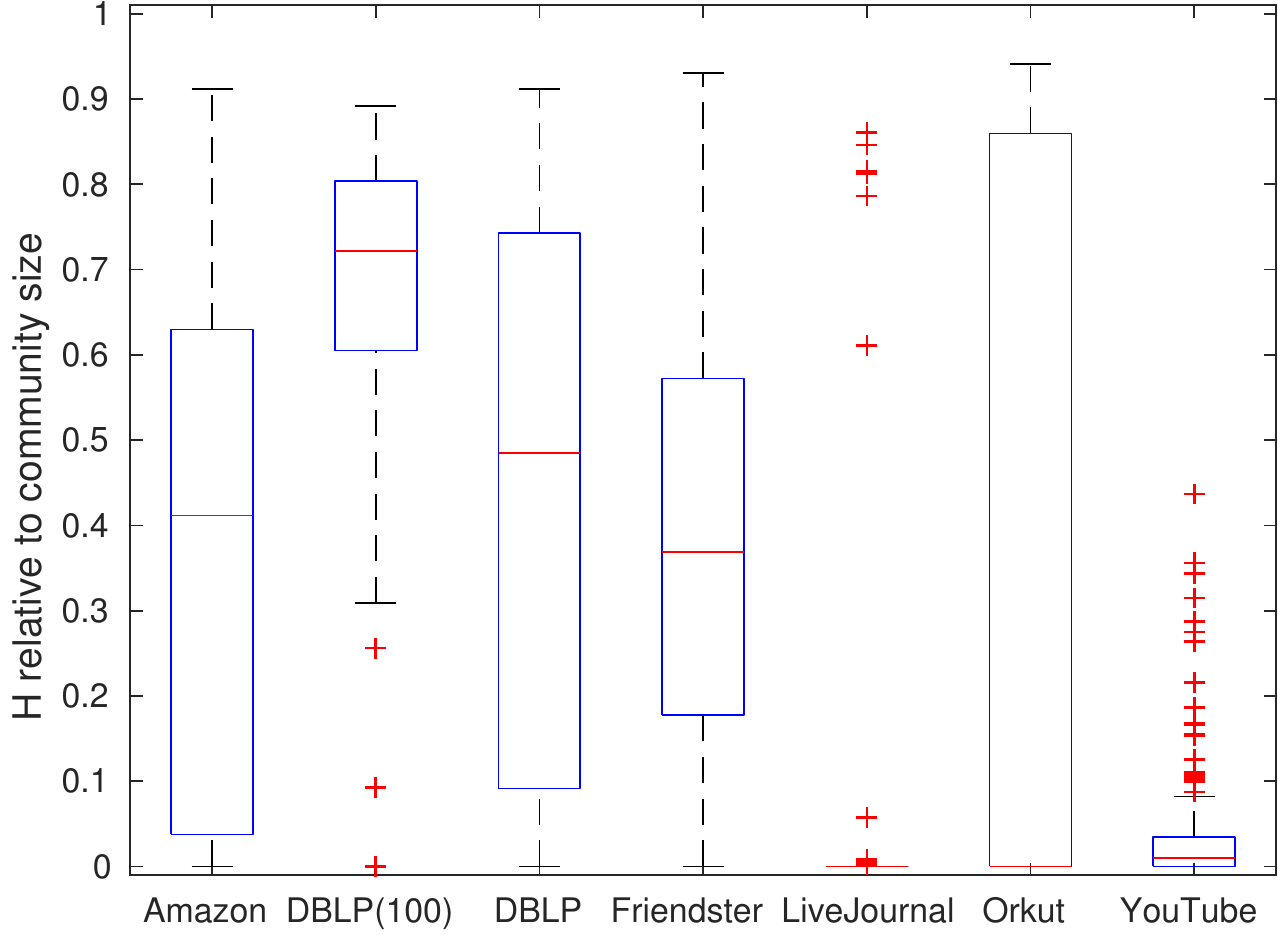}
		\label{fig:distH}
	}
	\subfloat[Distribution of $x$.]{
		\centering
		\includegraphics[width=0.3\textwidth]{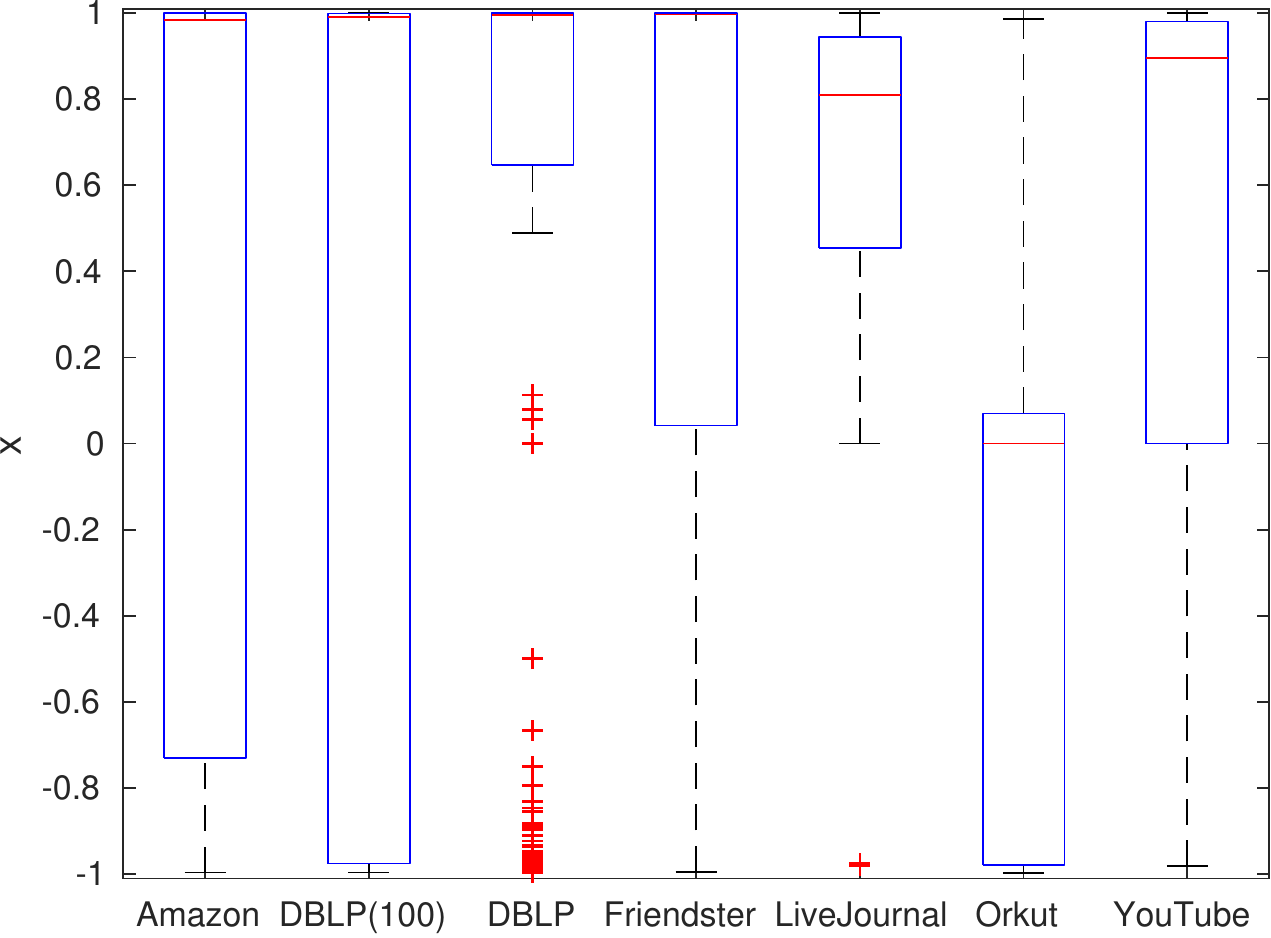}
		\label{fig:distx}
	}
	\caption{Distribution of selected parameters after fitting a hyperbolic model to  sampled communities of size 100 to 1000 from each dataset. The sample size was 500 for all datasets. For \dblp, the result is shown also for a sample of 100 communities.}
	\label{fig:distOfParams}
\end{figure*}

%For a sample of 500 communities per dataset we compute a hyperbolic model using the \mfixed{} parametrization and compare the shape of the obtained communities with each other. For the \dblp dataset we also include the results obtained on a sample of size 100 (the reason is discussed in Section~\ref{sec:comparisonBlockHyCom}). 

Figure~\ref{fig:distOfParams} summarizes the distributions of the parameters $\gamma$, $H$, and $x$ for different data sets, respectively. The latter parameter we infer by converting the results into the \mmixture{} parametrization. For all box plots, the median is indicated as the central red mark and the edges of each box are the 25 and 75 percentiles. The whiskers extend to the most extreme data points not considered outliers. 

The box plots reveal a characteristic shape for each data set: \amazon, \dblp, and \friendster show rather thick communities with $\gamma$ on median being more than $50\%$ of the community size. \lj, \orkut, and \youtube communities mostly have thin cores. While \orkut also has communities with big cores, \lj is at the other extreme and its communities mostly exhibit a star-like shape. These observations are in line with the intuitive idea about some of the data sets. \dblp, for instance, is the network of co-authorships and therefore many communities are almost quasi-cliques. 

Furthermore, we observe that the obtained models for all datasets differ significantly from the HyCom model that would require $x=0.5$ (see Section~\ref{sec:hycom}). As Figure~\ref{fig:distx} shows, none of the medians is near $x=0.5$.
To gain intuition on how these communities look like, we give a few examples in Figures~\ref{fig:exampleCommYoutube_both} and~\ref{fig:amazon_example}, for the \youtube and \amazon  data, and further examples in the appendix.

Our algorithm is implemented in Matlab and C. It took between half an hour and a full day to compute the models on a machine with four Intel Xeon E7-4860 10-core CPUs running at 2.27\,GHz and 256\,GB of main memory. The exact running times are given in Table~\ref{tab:datasets}. Notice that the time depends not only on the size of the graphs, but to a great extend on the amount of overlap between the communities. Overlapping communities have to be computed in a sequential manner as the model of one community has impact on the next. Additionally, an update to one model causes the re-computation of all communities that overlap with it. Hence, the computation for \dblp takes more than twice as much time than that for \friendster although the \friendster graph is orders of magnitude larger.

% timing in hours
%       our HyCom
%amazon 0.6	0.7
%dblp100 0.8	0.6
%dblp 27.0	22.5
%friendster 12.3	7.2
%lj 11.3	9.7
%orkut 3.1	2.0
%youtube 0.8	0.5

\begin{figure}[tb]
	\centering
	\subfloat[Before.]{
		\centering
		\includegraphics[width=0.472 \linewidth]{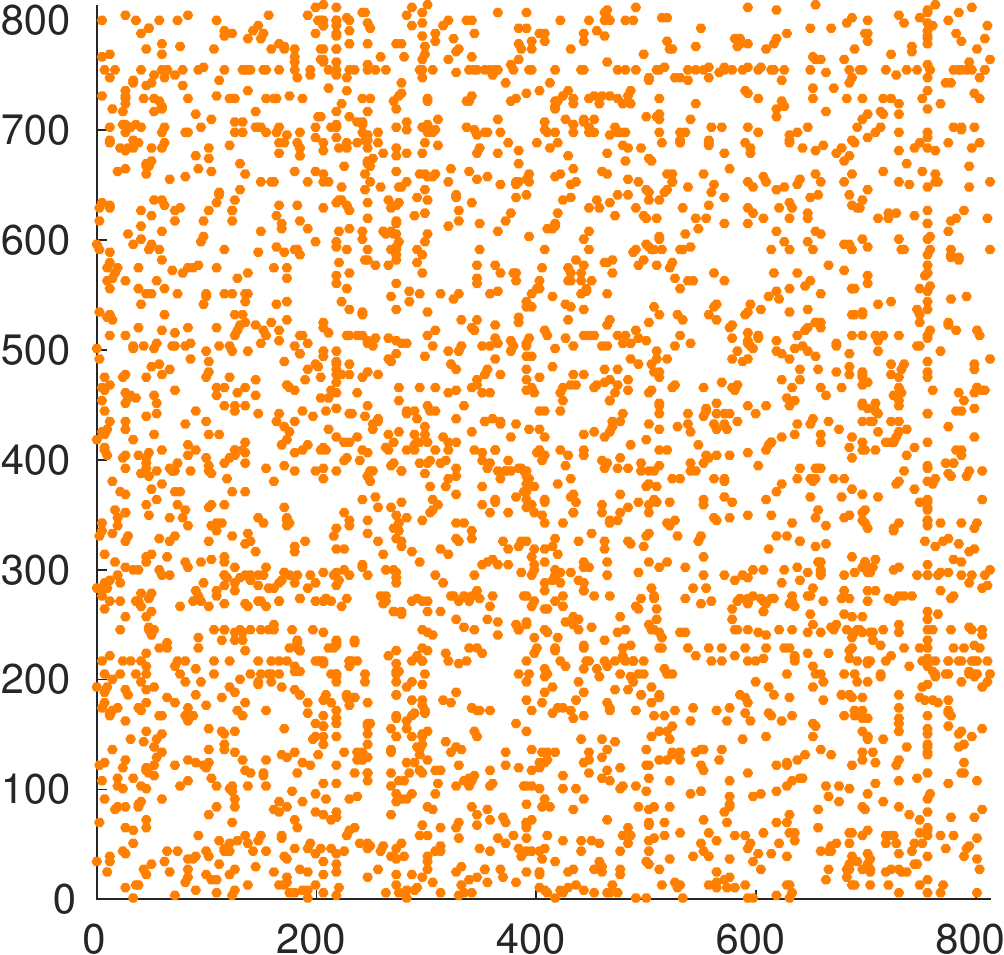}
		\label{fig:amazon_before}
	}%
	\subfloat[After fitting the model.]{
		\centering
		\includegraphics[width=0.472 \linewidth]{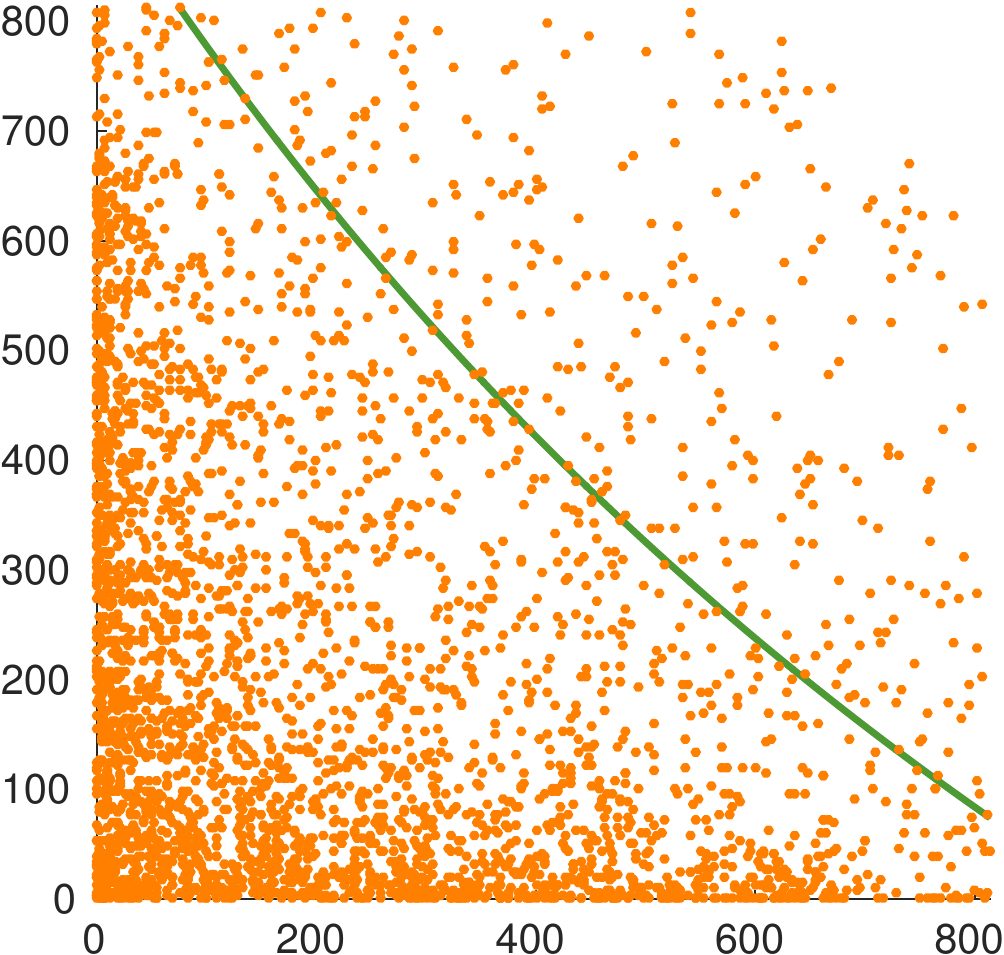}
		\label{fig:amazon_after}
	}
	\caption{Example of a community obtained from \amazon data.}
	\label{fig:amazon_example}
\end{figure}
%\begin{figure}[tb]
%	\centering
%	\subfloat[Before.]{
%		\centering
%		\includegraphics[width=0.475 \linewidth]{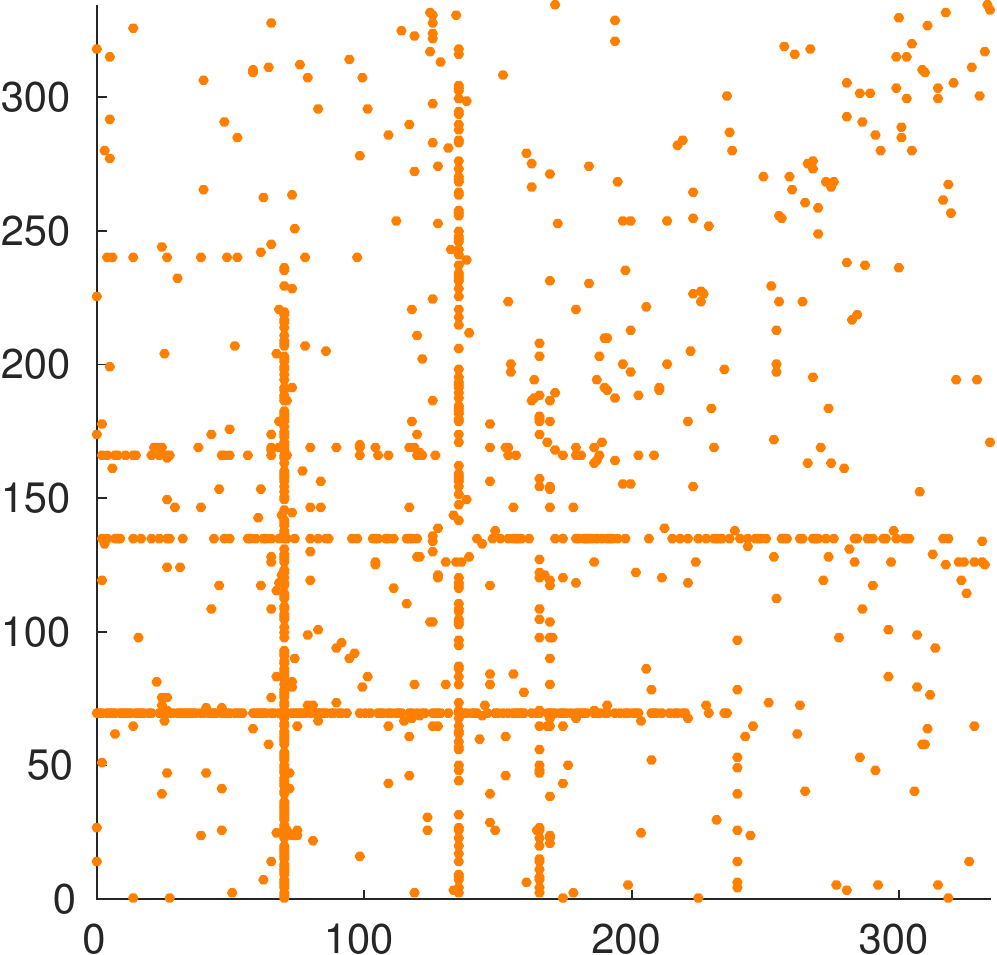}
%		\label{fig:friendster_before}
%	}%
%	\subfloat[After fitting the model.]{
%		\centering
%		\includegraphics[width=0.475 \linewidth]{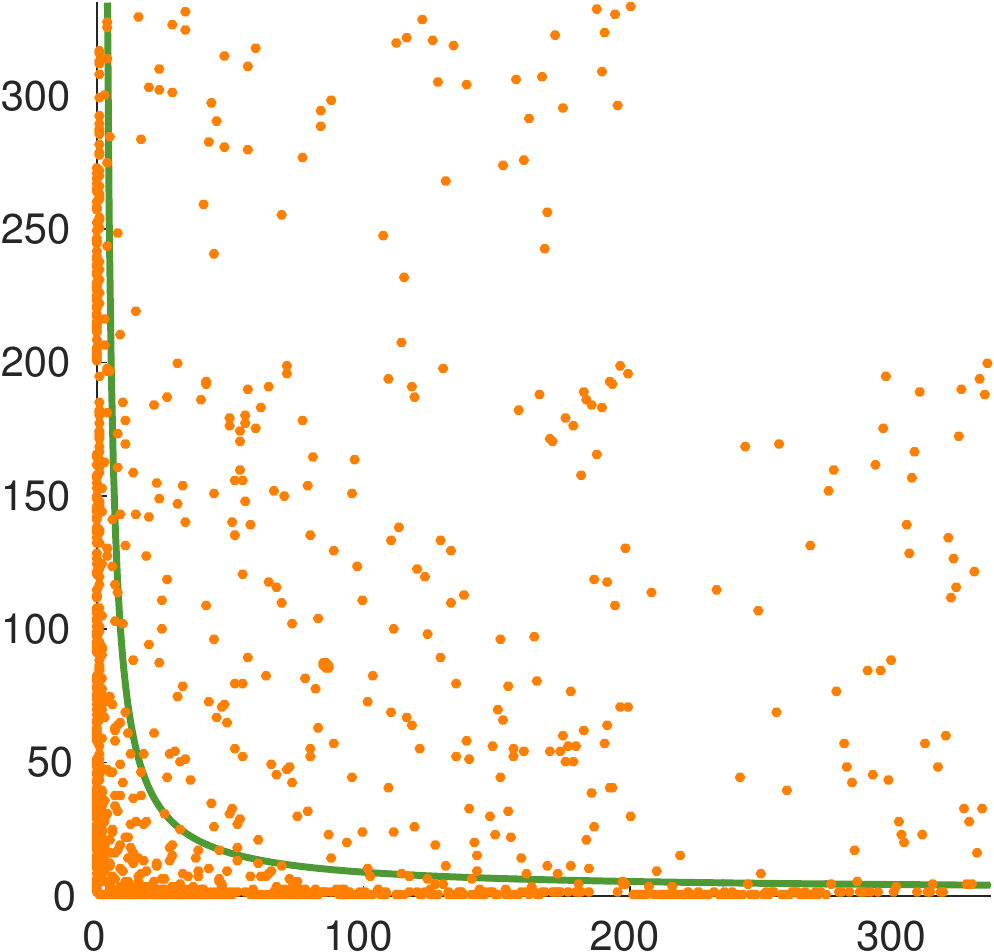}
%		\label{fig:friendster_after}
%	}
%	\caption{Example of a community obtained from \friendster data.}
%	\label{fig:friendster_example}
%\end{figure}
%\begin{figure}[tb]
%	\centering
%	\subfloat[Before.]{
%		\centering
%		\includegraphics[width=0.475 \linewidth]{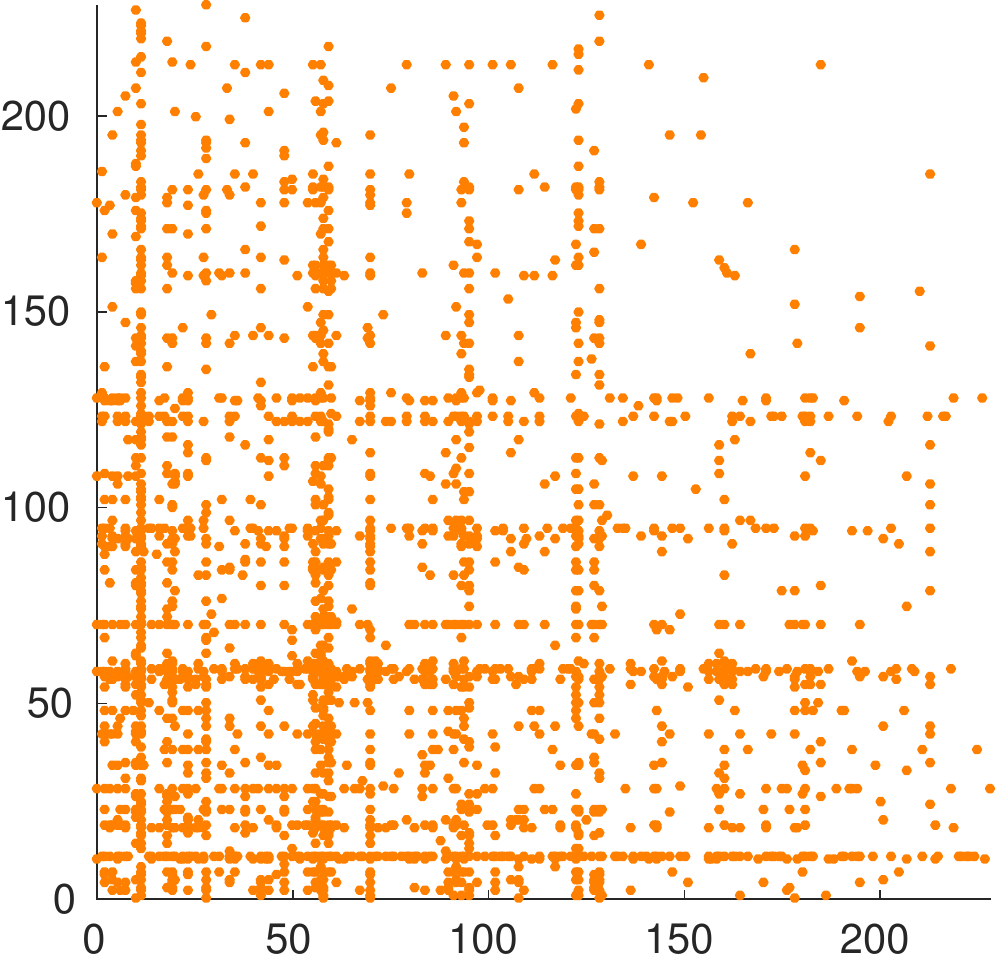}
%		\label{fig:youtube_before}
%	}%
%	\subfloat[After fitting the model.]{
%		\centering
%		\includegraphics[width=0.475 \linewidth]{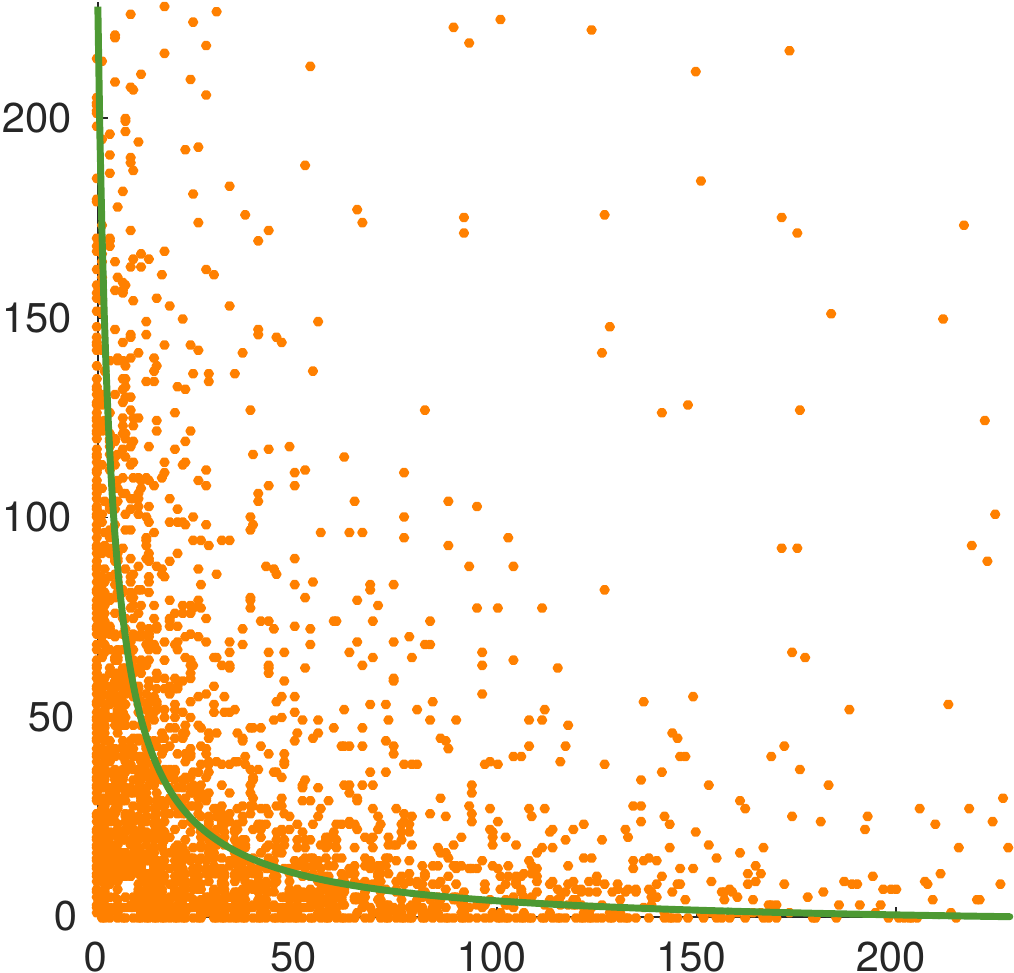}
%		\label{fig:youtuber_after}
%	}
%	\caption{Example of a community obtained from \youtube data.}
%	\label{fig:youtube_example}
%\end{figure}

\subsection{Comparison to block models and HyCom models}
\label{sec:comparisonBlockHyCom}

In Section~\ref{sec:hycom}, we describe two special cases of our model: In one case, every community is assumed to be a quasi-clique yielding exactly one possible parameter configuration per community, i.e.\ $H=\gamma=n_C$. In the other case, the HyCom model, only the threshold parameter $\Sigma$ may vary and the other is fixed to $x=0.5$. In this section, we assess the benefit of the additional flexibility of our model by comparing to these restricted versions. To do so, we compare the log-likelihood, computed as in~\eqref{eq:loglikelihood}, of the hyperbolic models obtained as in Section~\ref{sec:experiments:obtainedModels} to the log-likelihoods of the respective HyCom models and block models. The HyCom models of every data set were obtained by running Algorithm~\ref{alg:full_graph} but only admitting those $(H,\gamma)$ combinations that yield $x=0.5$. For the block models, no parameter search is necessary as there is only one admitted configuration per community.

To compare the log-likelihoods, we use the likelihood ratio test~\cite[Ch.~10.6]{wasserman10statistics}. 
%Its test statistic is $\chi^2$ distributed and hence allows for computing a $p$-value: 
In case of the block model, we test the null hypothesis $H_0$ that all parameters, i.e.\ all $H$ and $\gamma$ of all communities, are fixed to create the blocks versus the alternative hypothesis $H_1$ that the parameters are not fixed; for HyCom, we assume one free parameter. The likelihood ratio test statistics are given by
$\lambda=2 \log \bigl(L(\text{our model})/L(\text{block model})\bigr)$ and
$\lambda=2 \log \bigl(L(\text{our model})/L(\text{HyCom model})\bigr)$ for block and HyCom models, respectively.  The results are shown in Table~\ref{tab:LLratioTest}.

\paragraph{Block model} For the block model, the derived $p$-values are with one exception always essentially zero and confirm that the hyperbolic model is statistically significantly better than the block model.
The exception is the 500 sample of the \dblp data. For this data set, the block model gives a better likelihood than ours. While cliques are a special case of our model, and hence we can always model each community as a clique, the iterative method to update the model parameters (Section~\ref{sec:alg:full}) is based on a greedy heuristic. In case of the \dblp data, the greedy heuristic has reached a local optimum that is less good than what could be obtained with pure block models. On one hand, this is partially because \dblp contains block-like communities, and on the other hand, the large overlaps between these communities might have lead the optimization astray. To test the latter hypothesis, we also sampled just 100 communities from \dblp: this reduced the amount of overlap between the communities (from $33\%$ to $18\%$) and also significantly improved our algorithm's results.

%is better suited than the hyperbolic model although our approach includes blocks as a special case and the model we obtain for each community is close to optimal (see Proposition~\ref{prop:area_with_varying_gamma}). Our proposed algorithm for modelling \emph{all} communities in a graph however is a greedy approach and may converge to a local minimum. A potential reason for reaching only a poor solution could be a particularly large fraction of edges shared between multiple communities. While or approach allows nodes to belong to multiple communities, edges will always be claimed by exactly one community. This explanation is supported by the fact that our model is better again if we sample only 100 communities and therefore observe only $18\%$ of the nodes to participate in more than one community, as opposed to $33\%$ in case of the 500 sample.

\paragraph{HyCom} For the comparison to the HyCom model, we obtain a similar result: With one exception, our hyperbolic model describes the data statistically significantly better than the HyCom model. The better solution for the 100 sample of \dblp found with the parameter space restricted to HyCom models is also a valid solution within our more general modelling framework. The greedy algorithm we propose, however, gives no guarantee to converge to the globally best solution and this result indicates that it depends on the data whether additional freedom in the parameter space is a benefit or hindrance \emph{for the algorithm} to find a good solution.  More importantly, as we will see in the next section, starting with HyCom as initialization, our model is always significantly better.

\begin{table}[t]
	\caption{Test statistic of the likelihood ratio test between our models and block models, and our models and HyCom. For all datasets 500 communities were sampled. Additionally, we display the result for sampling 100 communities from the \dblp data.}
	\label{tab:LLratioTest}
	\centering
	\begin{tabular}{@{}lrrrr@{}}
		\toprule
		& \multicolumn{2}{c}{LL ratio} \\ 
		\cmidrule(rl){2-3}
				& block model & HyCom\\  
		\midrule
		\amazon &  26450.6& 30997.1 \\ 
		\dblp(100) & 3148.5 & -788.0\\ 
		\dblp & -264974.7 & 17958.1\\ 
		\friendster &  200627.6 & 17811.7\\ 
		\lj &   154982.4 & 22705.8\\ 
		\orkut &  11945.3 & 1598.5\\ 
		\youtube &   75689.6 & 12660.0\\ 
		\bottomrule 
	\end{tabular}
	\end{table}

\subsection{Finding communities}
\label{sec:experiments:findingCommunities}

Next, we demonstrate that our model improves the description of communities returned by existing community-finding methods. We used spectral clustering, Boolean matrix factorization, and HyCom to find the communities; the first two approaches look for clique-like communities while HyCom looks for hyperbolic shapes (see Section~\ref{sec:hycom}). We used various real-world data sets from the  University of Florida Sparse Matrix Collection~\cite{davis2011university}, summarized in Table~\ref{tab:datasets_small}. They are significantly smaller than the data sets we examined in the previous experiments to allow the community detection algorithms to work efficiently. As we aim to assess the quality of the models and do not propose an own method for finding communities, no ground-truth community information was employed for the evaluation.

\begin{table}[t]
	\caption{Datasets used for the community finding experiments.}
	\label{tab:datasets_small}
	\centering
	\begin{tabular}{@{}l@{\hspace{0.4em}}r@{\hspace{0.4em}}rl@{}}
		\toprule
		& nodes &  non-zeros & content\\ 
		\midrule
		\emaild & 1,133 & 10,902 & University email network\\ 
		\erdos & 472& 2,628	 & Erd\H{o}s collaboration network\\ 
		\jazz & 198 & 5,484.6& Network of Jazz musicians \\ 
		\polbooks &  105 & 882 & Books about US politics \\ 
		\bottomrule 
	\end{tabular}
\end{table}
 
\paragraph{Spectral clustering}
We first used spectral clustering with the normalized Laplacian~\cite{luxburg07tutorial} to cluster the nodes of the graph. The resulting communities are non-overlapping. We notice a significant benefit of modelling the obtained result by means of hyperbolic models, as the log-likelihood ratio test confirms for all examined datasets (see Table~\ref{tab:findingCommunitiesResults}). These results yielded $p$-values that were essentially zero, confirming that the results are statistically significant.  We have used $k=10$ clusters for \emaild, $k=8$ for \erdos, $k=5$ for \jazz, and $k=6$ for \polbooks. We display examples of modelled communities in Figure~\ref{fig:example_spectClust}. These example communities show relatively large cores but thin tails, with most edges being in the lower triangular area. Our models clearly capture this phenomenon.
 
\begin{figure}[tb]
	\centering
%	\subfloat[]{
%		\centering
%		\includegraphics[width=0.475 \linewidth]{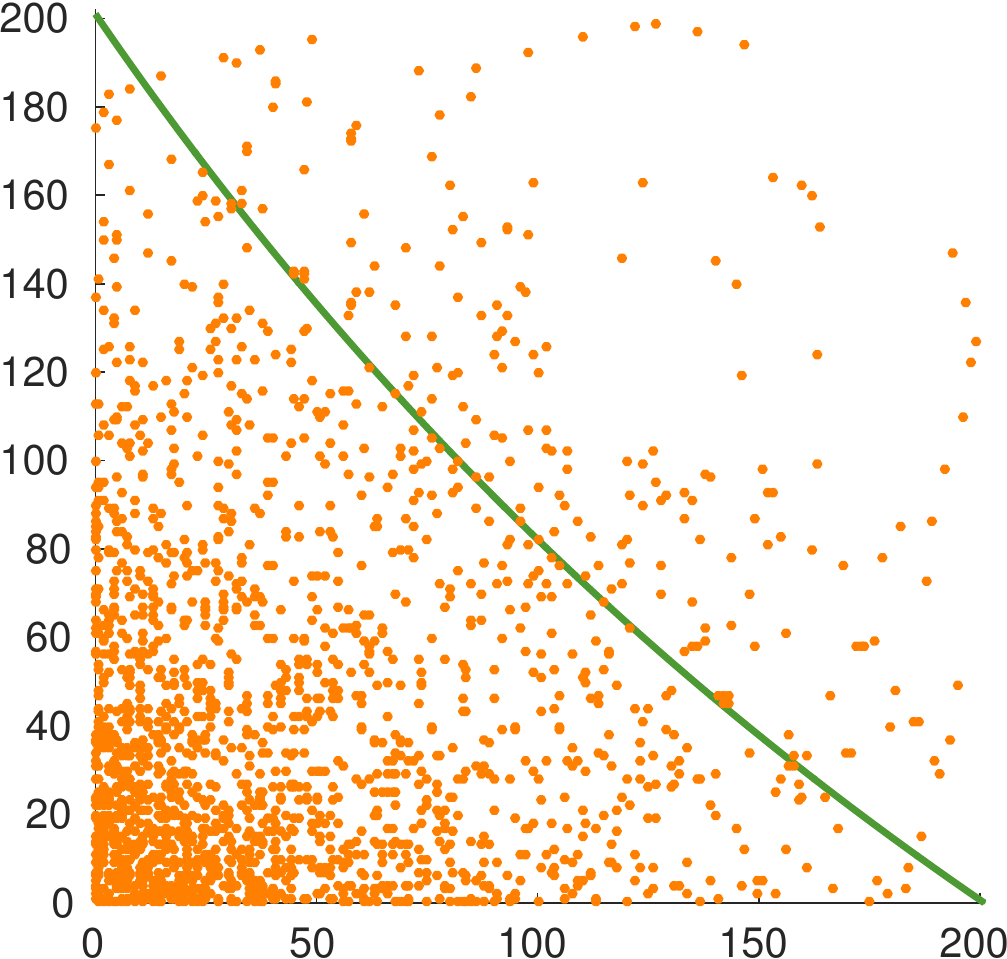}
%		\label{fig:email_spectClust_1}
%	}%
	\subfloat[Community from \emaild data.]{
		\centering
		\includegraphics[width=0.472 \linewidth]{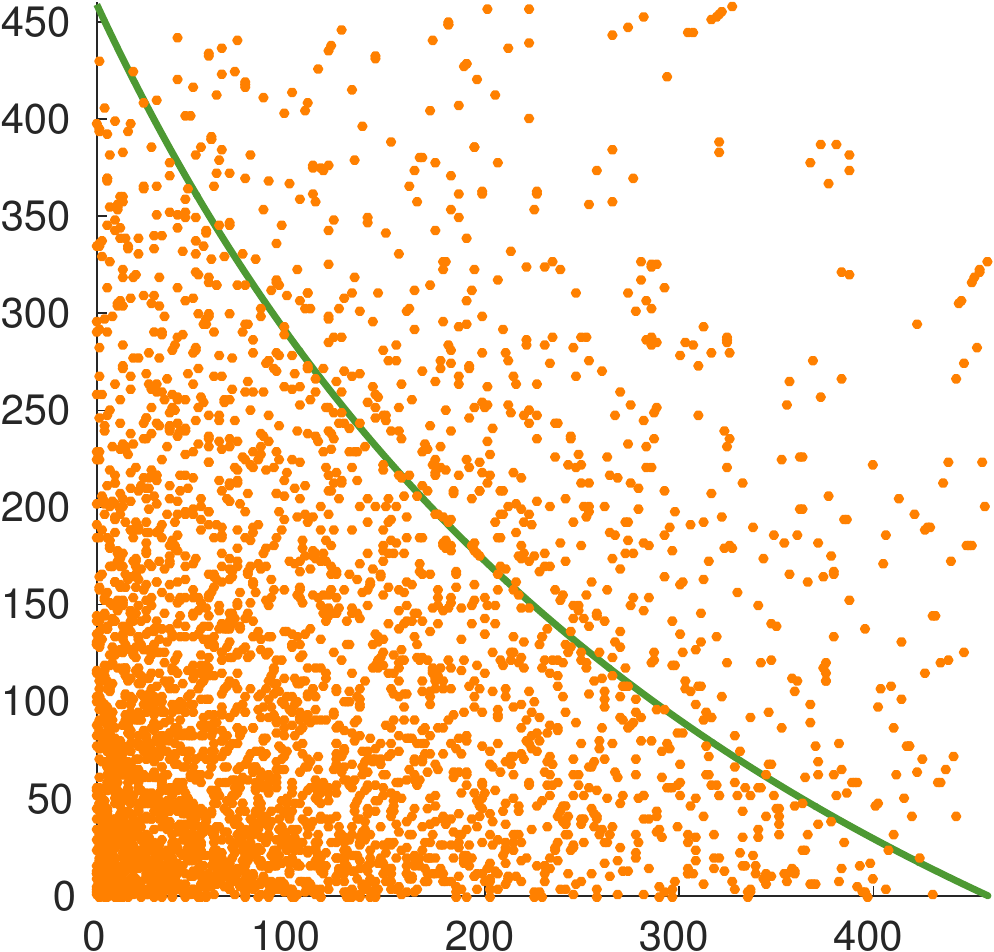}
		\label{fig:email_spectClust_2}
	}
	\subfloat[Community from \jazz data.]{
		\centering
		\includegraphics[width=0.472 \linewidth]{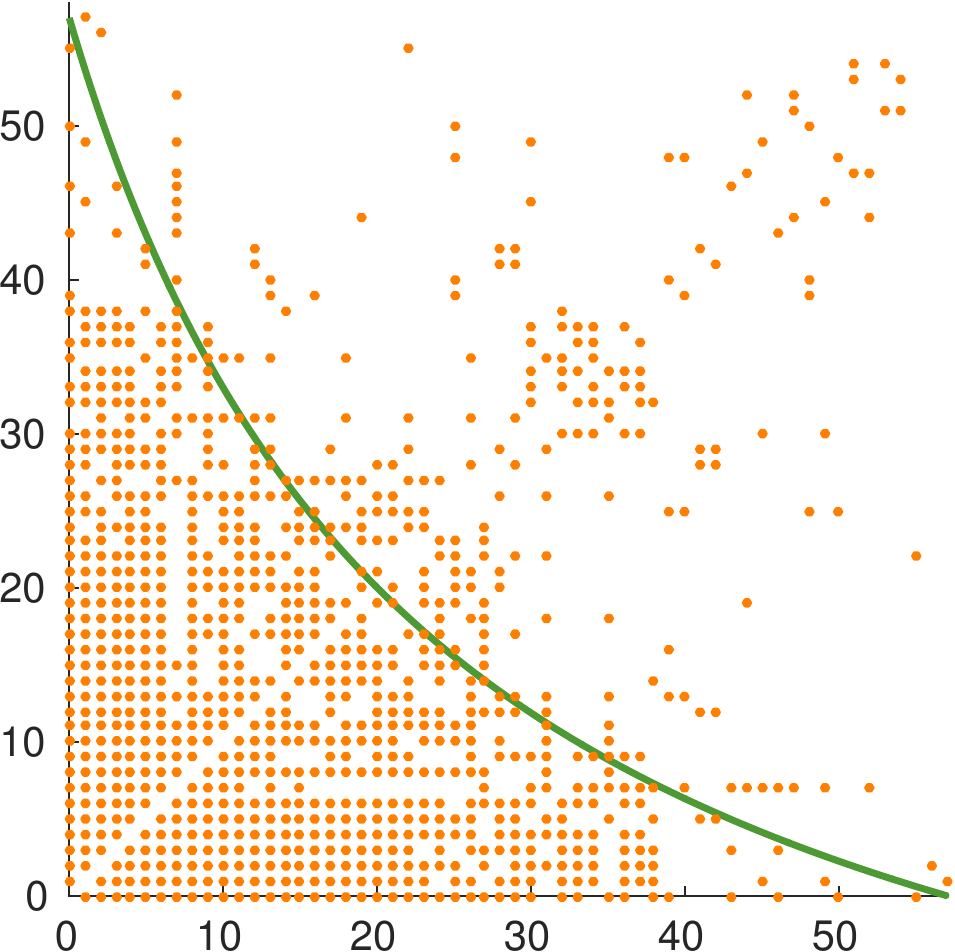}
		\label{fig:jazz_spectClust_2}
	}
		\caption{Examples of communities fitted by our model. The initial communities were obtained using spectral clustering on the respective data.}
	\label{fig:example_spectClust}
\end{figure}
\begin{figure}[tb]
	\centering
%	\subfloat[]{
%		\centering
%		\includegraphics[width=0.475 \linewidth]{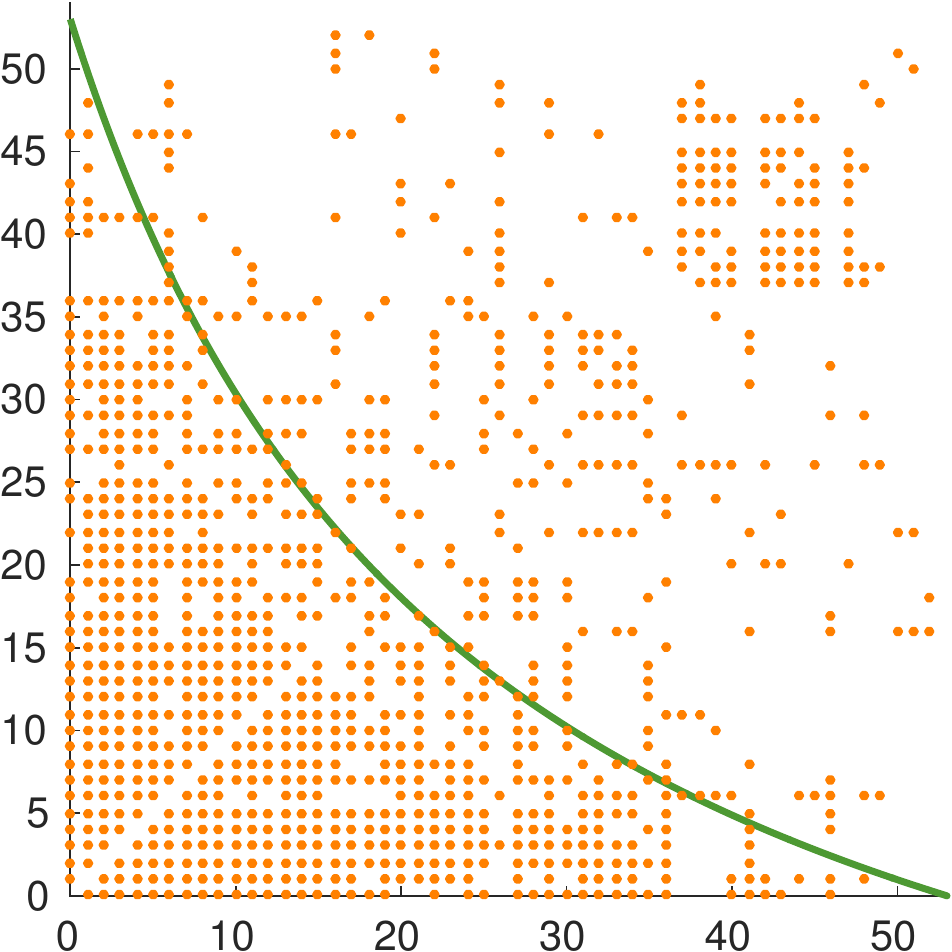}
%		\label{fig:jazz_BMF_1}
%	}%
	\subfloat[Community from \jazz data.]{
		\centering
		\includegraphics[width=0.472 \linewidth]{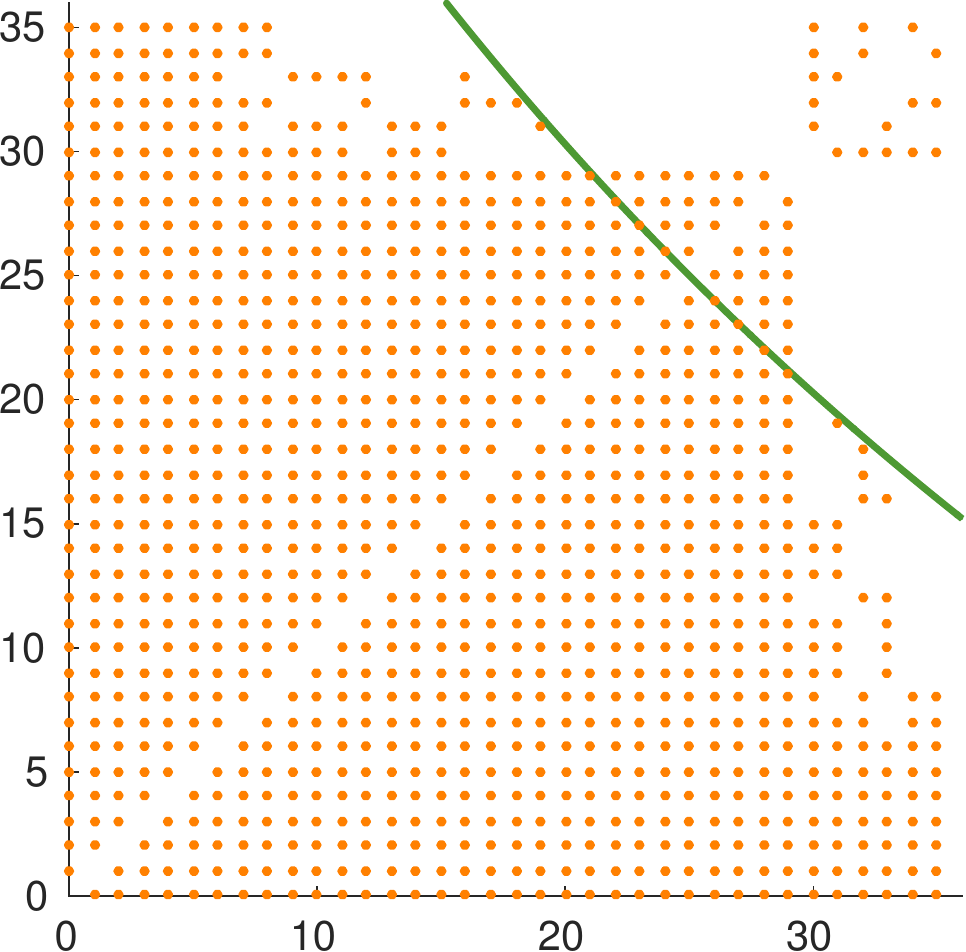}
		\label{fig:jazz_BMF_2}
	}
	\subfloat[Community from \polbooks data.]{
		\centering
		\includegraphics[width=0.472 \linewidth]{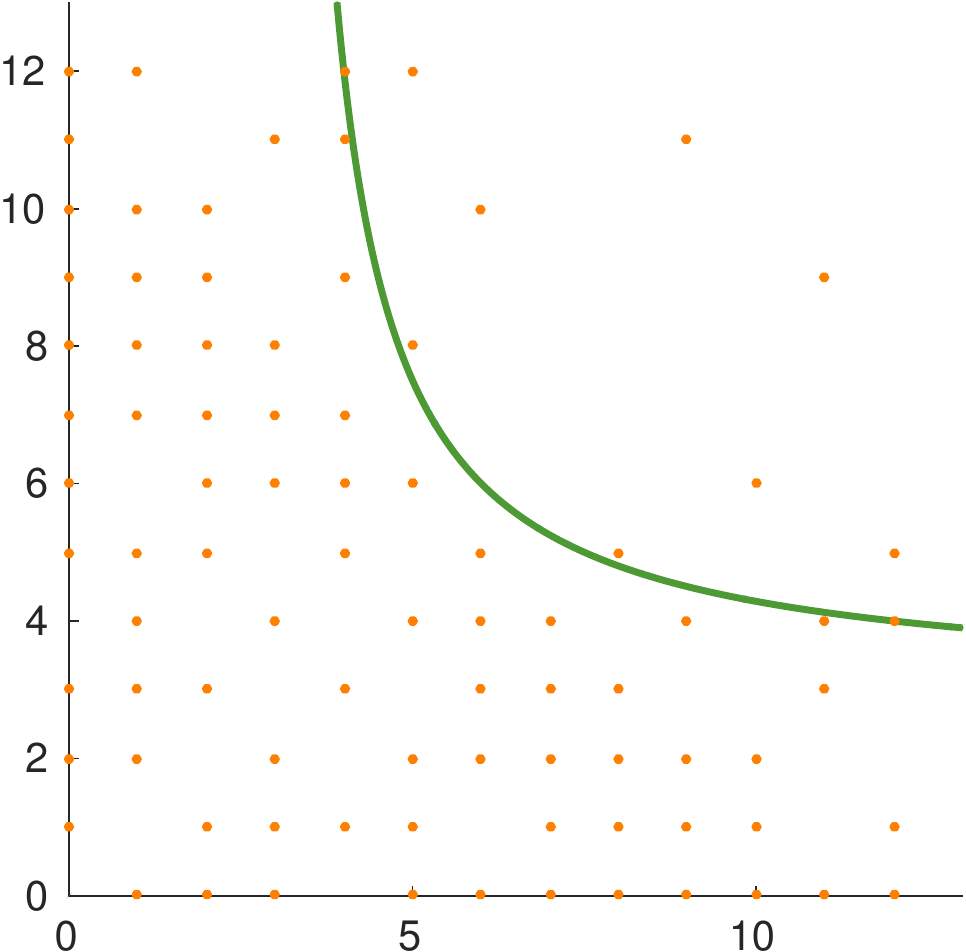}
		\label{fig:polbooks_BMF_2}
	}
	%	\subfloat[]{
%		\centering
%		\includegraphics[width=0.475 \linewidth]{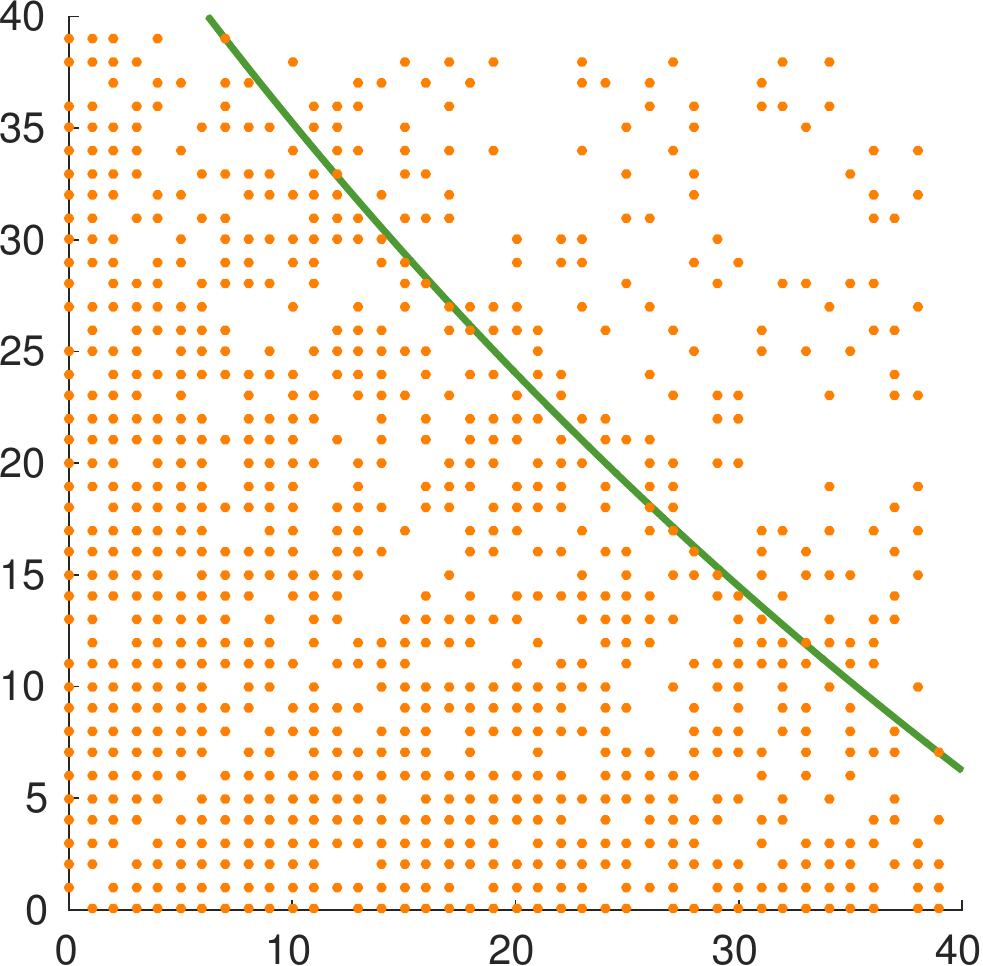}
%		\label{fig:jazz_BMF_3}
%	}
		\caption{
			Examples of communities fitted by our model. The initial communities were obtained using Boolean matrix factorization.
		}
	\label{fig:example_BMF}
\end{figure}

\paragraph{Boolean matrix factorization}
To find overlapping communities, we used Boolean matrix factorization (BMF)~\cite{miettinen2008discrete}. We used the Asso algorithm~\cite{miettinen2008discrete} with the same number of communities as with spectral clustering. We set the threshold parameter $\tau$ of Asso to $0.6$ and the weight $w$ to $10$. We again find that our models resemble the data significantly better than the corresponding block models (see Table~\ref{tab:findingCommunitiesResults}). Figure~\ref{fig:example_BMF} shows example communities from the \polbooks and \jazz data. 

\paragraph{HyCom algorithm}
We ran the HyCom algorithm~\cite{araujo2014beyond} on each of the data sets stopping after we found $k$ communities, with $k$ specified as above. As the likelihood ratio test confirms (Table~\ref{tab:findingCommunitiesResults}), our hyperbolic model improves the result of the HyCom algorithm. 

\begin{table}[t]
	\caption{Statistic of the likelihood ratio test between our and block models, as well as our and HyCom models. The communities are found using HyCom, spectral clustering, and BMF.}
	\label{tab:findingCommunitiesResults}
	\centering
	\begin{tabular}{@{}lrrr@{}}
		\toprule
		& \multicolumn{3}{c}{LL ratio} \\ 
		\cmidrule(rl){2-4}
		& \multicolumn{1}{c}{spectral clustering} & \multicolumn{1}{c}{BMF} & \multicolumn{1}{c}{HyCom}\\ 
		\midrule
		\emaild & 10895.8 & 3552.0 & 250.1\\
		\erdos & 1797.0 & 949.0 & 256.3\\
		\jazz & 3003.8 & 4435.0 & 3718.5\\
		\polbooks & 648.0 & 303.3 & 228.2\\
		\bottomrule 
	\end{tabular}
\end{table}

%%% Local Variables:
%%% mode: latex
%%% TeX-master: "paper"
%%% End:

\subsection{Discussion}
\label{sec:discussion}

Our experiments have verified our intuition that the communities in real-world graphs are better modelled using our models than the traditional quasi-clique models, and that our models are an improvement over the previously proposed HyCom model~\cite{araujo2014beyond}. This holds true for a variety of data sets, both with ground-truth communities, and with communities detected with existing methods. It is important to notice that the existing methods, especially the BMF, aim at finding clique-like communities. Thus, since our algorithm uses their results as the initial community candidates, any weaknesses of these algorithms will also affect our result. Still, our experiments show that our models provide statistically significantly better fit, even when we take into account the increased number of free parameters for our model. 

Not only is our model a better fit for the data, it also provides interesting insights to the shape of the communities. The easy interpretability of the parameters $\gamma$ and $H$ means that we can simply study a summary of their distributions to gain an understanding on how the communities in a data look like, whether the cores are small or big and whether the tails are fat or skinny. This allows a data analyst to obtain a fast general understanding about the data without having to look at any particular community.

Finally, our experiments also demonstrate the scalability of our method. It had no problem of handling even the largest graph, \friendster, with approximately $65.6$ million nodes and $1.8$ billion edges.

%%% Local Variables:
%%% mode: latex
%%% TeX-master: "paper"
%%% End:

\section{Conclusions}
\label{sec:conclusions}

We have proposed three novel models to describe hyperbolic communities. Based on the observation that communities in real-world graphs do not correspond to blocks of uniform density, our models capture the density distributions per community more accurately.
We have shown that all models have the same expressive power, and we proposed an algorithm to fit these models to a given community -- and likewise, to fit multiple, potentially overlapping, communities to represent the full graph.

In our experimental study, we have analysed a large variety of real-world datasets leading to interesting insights about the data's inherent community structure -- showing variations from star-like data to data with patterns similar to quasi-cliques. Our hyperbolic model captures all these scenarios as special cases. % and is significantly better suited to represent communities than existing hyperbolic models and block models.
Last, comparing the likelihood obtained by our model w.r.t.\ block-modelling approaches and existing hyperbolic models clearly shows the superiority of our hyperbolic community model for real-world data.

\paragraph{Future work}
%%%As future work, we plan to develop further optimization principles allowing to efficiently fit the parameters of our model to the given data. 
While our current model allows node overlapping communities, we have restricted edges to be part of at most one community. Extending our models and algorithms to handle edge overlaps is an important research direction we aim to investigate.
Moreover, while this work focused on modeling a set of communities, we aim to investigate community detection algorithms able to detect hyperbolic structures directly. To that end, the aforementioned nonnegative rounding rank decompositions~\cite{neumann16what} provide an interesting approach.

%%% Local Variables:
%%% mode: latex
%%% TeX-master: "paper"
%%% End:

\section*{Acknowledgements}
\small This research was supported by the German Research Foundation (DFG), Emmy Noether grant GU 1409/2-1, and by the Technical University of Munich - Institute for Advanced Study, funded by the German Excellence Initiative and
the European Union Seventh Framework Programme under grant agreement
no 291763, co-funded by the European Union.

%\pagebreak
\bibliographystyle{abbrv}
\bibliography{bibliography}

\appendix

\section{Proof of Proposition 4}
\label{sec:proof:prop:4}

Recall that the Proposition 4 was:
\setcounter{theorem}{3}
\begin{proposition}
  \label{prop:area_with_varying_gamma} Let $C = \mfixed{\gamma, H}$ be a community of $n_C$ nodes defined by $\gamma\in\R$ and $H\in\N$, and let $A_C$ be its area. Then there exists integer $\gamma'$ such that if $D = \mfixed{\gamma', H}$ is the community defined by $\gamma'$ and $H$, and $A_D$ is the respective area, then $\abs{A_C - A_D} \in \Theta(\gamma \ln(n_C))$.
\end{proposition}

We will prove the claim by bounding the difference of the area between $\gamma$ and $\gamma+1$, which is clearly sufficient. 

Notice first that we can assume $H=0$: the tail part will always contribute the same area of $Hn_C - H^2/2$ irrespective of $\gamma$. 

Consider for now the \mhyperbolic{} model and recall that the hyperbolic equation is
\begin{equation}
  \label{eq:1}
  (x + p)(y + p) = \theta\; ,
\end{equation}
where we used $x$ and $y$ instead of $i$ and $j$ to emphasize their continuous nature. From \eqref{eq:1} we get that 
\begin{equation}
  \label{eq:2}
  y = \frac{\theta}{x+p} - p\; .
\end{equation}
The area of the function is the integral of \eqref{eq:2} from 0 to $n_C$, that is,
\begin{equation}
  \label{eq:3}
  A_C = \int_0^{n_C}\frac{\theta}{x+p} - p\; \mathrm{d} x\; .
\end{equation}
Here we dropped the $-1$ from $n_C$ for the sake of clarity; it will not effect the asymptotic analysis in any case. To get back to the \mfixed{} model, we can substitute
\begin{align}
  \label{eq:5}
  p  &= \frac{\gamma^2}{n_C - 2\gamma} \\
  \theta &= \frac{\gamma^2(\gamma^2 - n_C)^2}{(n_C - 2\gamma)^2} \; , \label{eq:theta}
\end{align}
following equations (8) and (9) of the main paper, respectively. 

Given the constraints that $\gamma \in [0,n_C/2)$, we can solve the integral \eqref{eq:3} with the substitutions \eqref{eq:5} and \eqref{eq:theta} to be
\begin{multline}
  \label{eq:4}
  \bigl({\gamma}^{2} ( 4 \gamma \ln( \gamma )  n_C 
-2 {\gamma}^{2}\ln( \gamma ) -2 \ln( \gamma ) { n_C }^{2} 
+2 {\gamma}^{2}\ln( -\gamma+ n_C  ) \\
-{ n_C }^{2}+2 \gamma  n_C +2 \ln( -\gamma+ n_C  ) { n_C }^{2}
-4 \gamma\ln( -\gamma+ n_C  )  n_C  )\bigr)\\
/\bigl({ n_C }^{2}-4 \gamma  n_C +4 {\gamma}^{2}\bigr) \; .
\end{multline}

Consequently, the difference between the areas with $\gamma$ and $\gamma+1$ is 
\begin{multline}
\label{eq:big}
-\bigl(-2\, n_C +\ln   ( \gamma^2+2\,\gamma +1  ) \gamma^2 n_C^2\\
-\ln   ( \gamma^2+2\,\gamma -2\,\gamma  n_C +1-2\, n_C + n_C^2  ) \gamma^2 n_C^2\\
 -6\,\gamma  n_C +\ln   ( \gamma^2+2\,\gamma +1  )  n_C^2\\
+\gamma^4\ln   ( \gamma^2+2\,\gamma +1  ) -\ln   ( \gamma^2+2\,\gamma -2\,\gamma  n_C +1-2\, n_C + n_C^2  )  n_C^2\\
 -\gamma^4\ln   ( \gamma^2+2\,\gamma -2\,\gamma  n_C +1-2\, n_C + n_C^2  ) \\
+\gamma^2 n_C^2-4\,\gamma^3\ln   ( \gamma^2+2\,\gamma -2\,\gamma  n_C +1-2\, n_C + n_C^2  ) \\
-6\,\ln   ( \gamma^2+2\,\gamma -2\,\gamma  n_C +1-2\, n_C + n_C^2  ) \gamma^2\\
 -4\,\ln   ( \gamma^2+2\,\gamma -2\,\gamma  n_C +1-2\, n_C + n_C^2  ) \gamma \\
+4\,\gamma^3\ln   ( \gamma^2+2\,\gamma +1  ) +6\,\ln   ( \gamma^2+2\,\gamma +1  ) \gamma^2\\
 +4\,\ln   ( \gamma^2+2\,\gamma +1  ) \gamma -2\,\ln   ( \gamma^2+2\,\gamma +1  )  n_C \\
+2\,\ln   ( \gamma^2+2\,\gamma -2\,\gamma  n_C +1-2\, n_C + n_C^2  )  n_C \\
 -2\,\gamma^3 n_C -6\,\gamma^2 n_C +2\,\gamma  n_C^2+ n_C^2\\
-\ln   ( \gamma^2+2\,\gamma -2\,\gamma  n_C +1-2\, n_C + n_C^2  ) \\
+\ln   ( \gamma^2+2\,\gamma +1  ) -2\,\gamma^3\ln   ( \gamma^2+2\,\gamma +1  )  n_C \\
 +6\,\ln   ( \gamma^2+2\,\gamma -2\,\gamma  n_C +1-2\, n_C + n_C^2  ) \gamma^2 n_C \\
+6\,\ln   ( \gamma^2+2\,\gamma -2\,\gamma  n_C +1-2\, n_C + n_C^2  ) \gamma  n_C \\
 -6\,\ln   ( \gamma^2+2\,\gamma +1  ) \gamma^2 n_C \\
+2\,\ln   ( \gamma^2+2\,\gamma +1  ) \gamma  n_C^2\\
-6\,\ln   ( \gamma^2+2\,\gamma +1  ) \gamma  n_C \\
 -2\,\ln   ( \gamma^2+2\,\gamma -2\,\gamma  n_C +1-2\, n_C + n_C^2  ) \gamma  n_C^2\\
+2\,\gamma^3\ln   ( \gamma^2+2\,\gamma -2\,\gamma  n_C +1-2\, n_C + n_C^2  )  n_C \bigr)\\
/\bigl( n_C^2-4\,\gamma  n_C -4\, n_C +4\,\gamma^2+8\,\gamma +4\bigr)\\
+\bigl( \gamma^2  ( -4\,\gamma \ln   ( \gamma   )  n_C +2\,\gamma^2\ln   ( \gamma   ) \\
+2\,\ln   ( \gamma   )  n_C^2-\gamma^2\ln   ( \gamma^2-2\,\gamma  n_C + n_C^2  ) \\
+ n_C^2-2\,\gamma  n_C -\ln   ( \gamma^2-2\,\gamma  n_C + n_C^2  )  n_C^2\\
 +2\,\gamma \ln   ( \gamma^2-2\,\gamma  n_C + n_C^2  )  n_C   )\bigr)\\
/\bigl( n_C^2-4\,\gamma  n_C +4\,\gamma^2\bigr)
\end{multline}

If we denote \eqref{eq:big} with $d(\gamma, n_C)$, we can notice that
\begin{equation}
  \label{eq:6}
  \lim_{n_C\to\infty}\frac{\gamma\ln(n_C)}{d(\gamma, n_C)} \to \frac{1}{2/\gamma + 4}\; ,
\end{equation}
which is in $(0, 1/4)$ for any $\gamma$, showing that $d(\gamma, n_C)\in \Theta(\gamma\ln(n_C))$ and concluding the proof. 

\section{Examples of Modelled Communities}
\subsection{Models from Ground-Truth Communities}
Figures~\ref{fig:dblp_example}, \ref{fig:friendster_example}, and \ref{fig:youtube_example} show further examples of communities from different data sets and the models we found.
\subsection{Models after Spectral Clustering}
Figures~\ref{fig:erdos_spectClust}, \ref{fig:jazz_spectClust}, and \ref{fig:email_spectClust} give additional examples of the modelled communities obtained by spectral clustering.
\subsection{Models after Boolean Matrix Factorization}
Figures~\ref{fig:email_BMF}, \ref{fig:erdos_BMF}, \ref{fig:polbooks_BMF}, and \ref{fig:jazz_BMF}, show the models obtained for communities found by Boolean matrix factorization.

\begin{figure}[h]
	\centering
	\subfloat[Before.]{
		\centering
		\includegraphics[width=0.472 \linewidth]{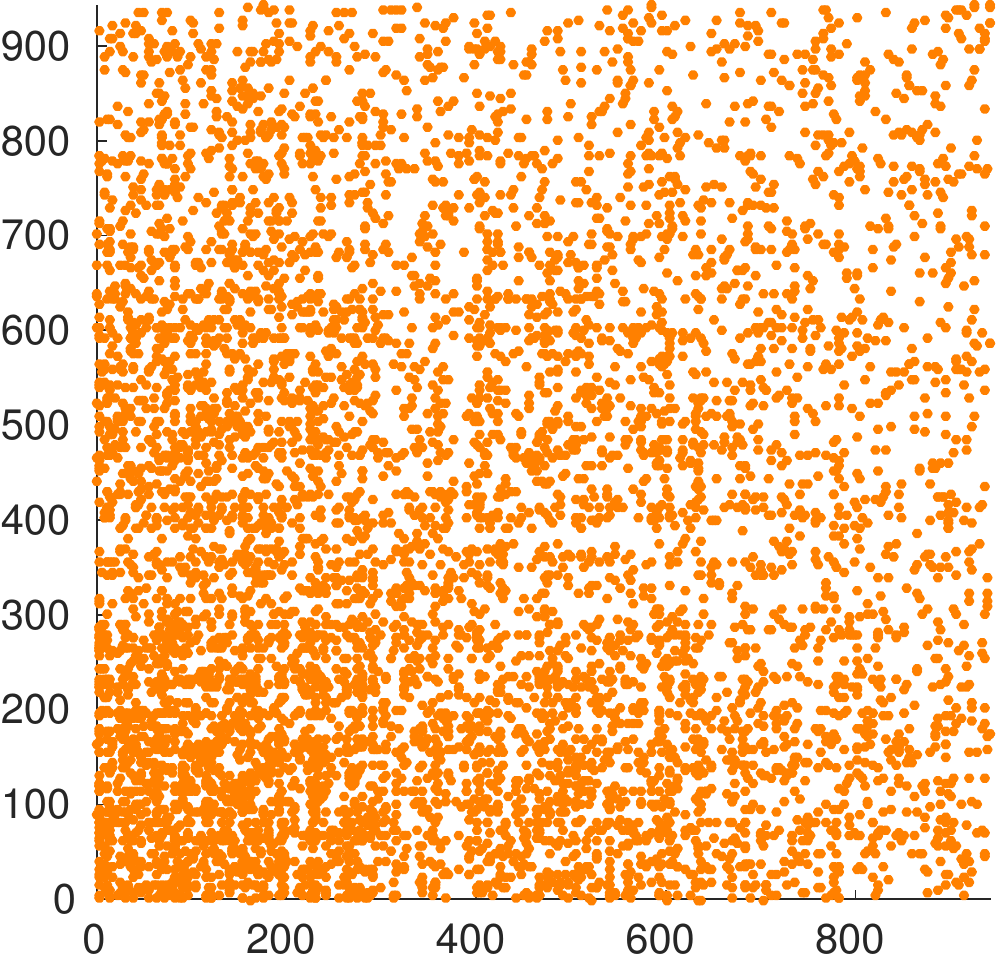}
		\label{fig:dblp_before}
	}%
	\subfloat[After fitting the model.]{
		\centering
		\includegraphics[width=0.472 \linewidth]{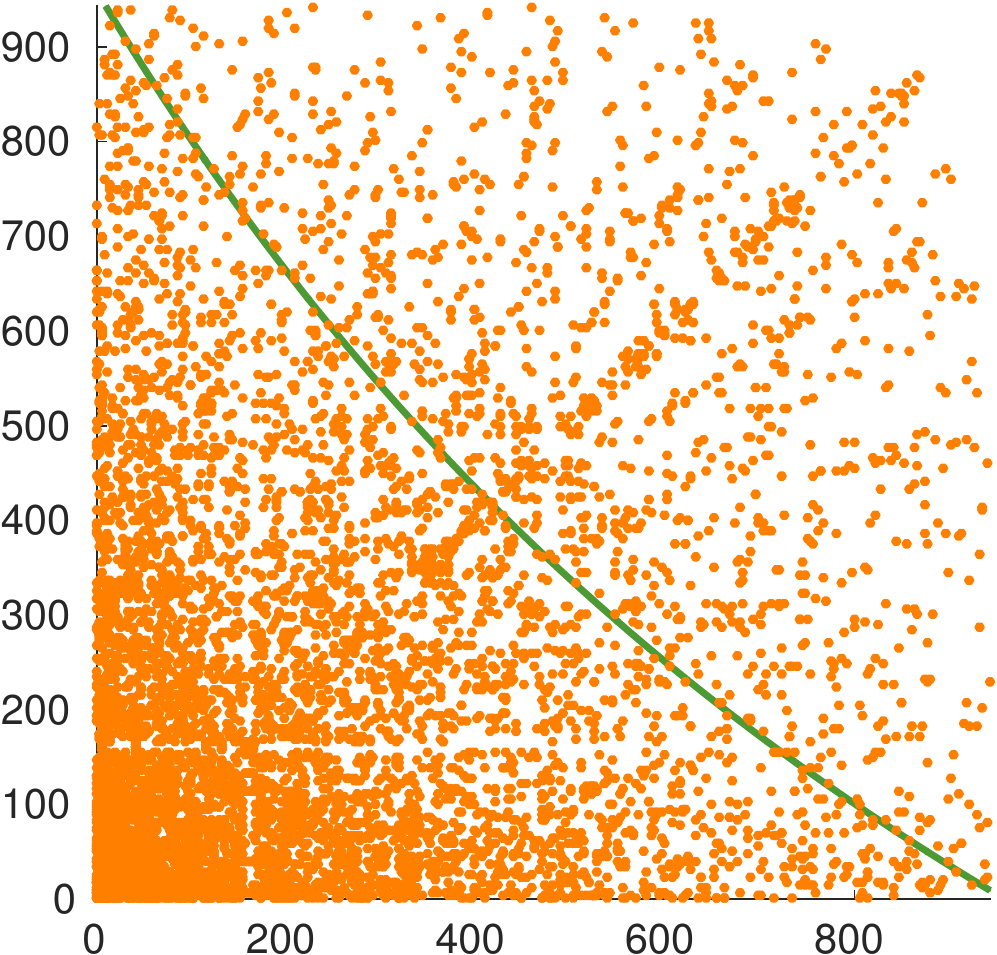}
		\label{fig:dblp_after}
	}
	\caption{Example of a community obtained from \dblp data.}
	\label{fig:dblp_example}
\end{figure}
\begin{figure}[!h]
	\centering
	\subfloat[Before.]{
		\centering
		\includegraphics[width=0.472 \linewidth]{friendster_before}
		\label{fig:friendster_before}
	}%
	\subfloat[After fitting the model.]{
		\centering
		\includegraphics[width=0.472 \linewidth]{friendster_after}
		\label{fig:friendster_after}
	}
	\caption{Example of a community obtained from \friendster data.}
	\label{fig:friendster_example}
\end{figure}
\begin{figure}[!h]
	\centering
	\subfloat[Before.]{
		\centering
		\includegraphics[width=0.472 \linewidth]{youtube_before}
		\label{fig:youtube_before}
	}%
	\subfloat[After fitting the model.]{
		\centering
		\includegraphics[width=0.472 \linewidth]{youtube_after}
		\label{fig:youtuber_after}
	}
	\caption{Example of a community obtained from \youtube data.}
	\label{fig:youtube_example}
\end{figure}

\begin{figure}[!h]
	\centering
	\subfloat[Before.]{
		\centering
		\includegraphics[width=0.472 \linewidth]{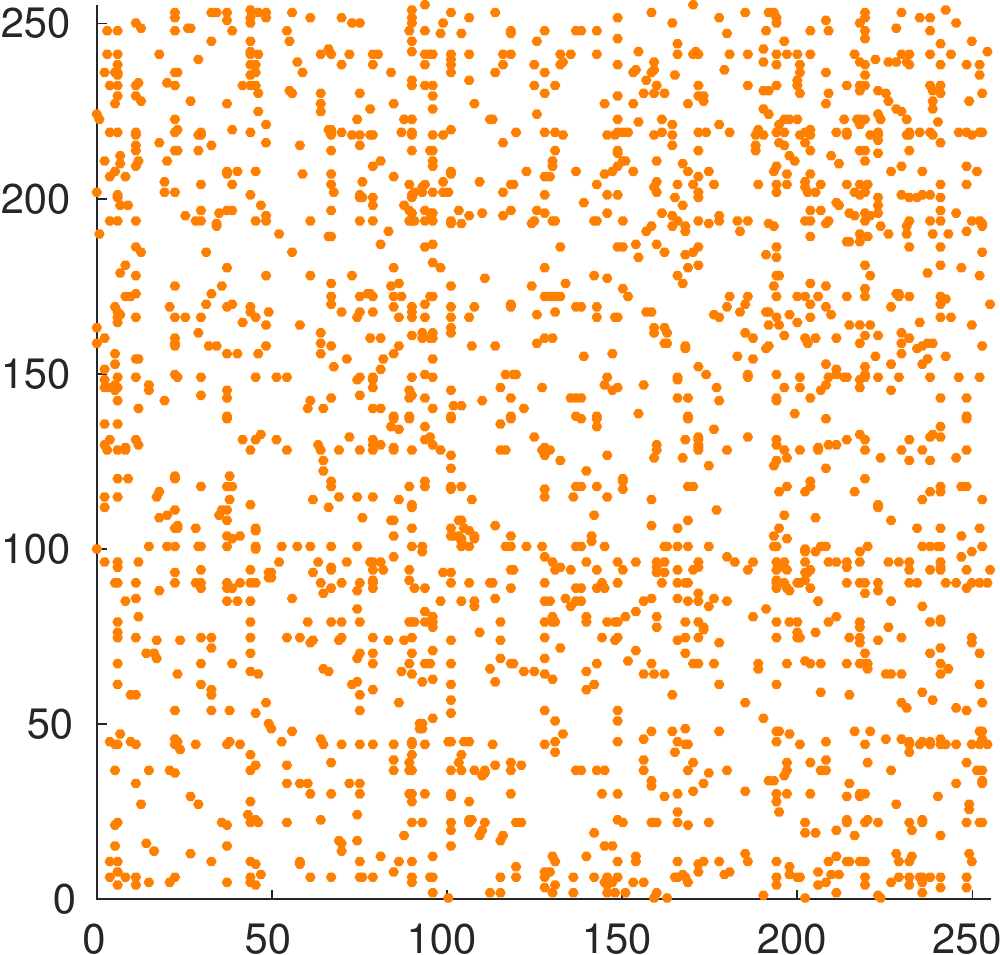}
		\label{fig:erdos_spectClust_before}
	}%
	\subfloat[After fitting the model.]{
		\centering
		\includegraphics[width=0.472 \linewidth]{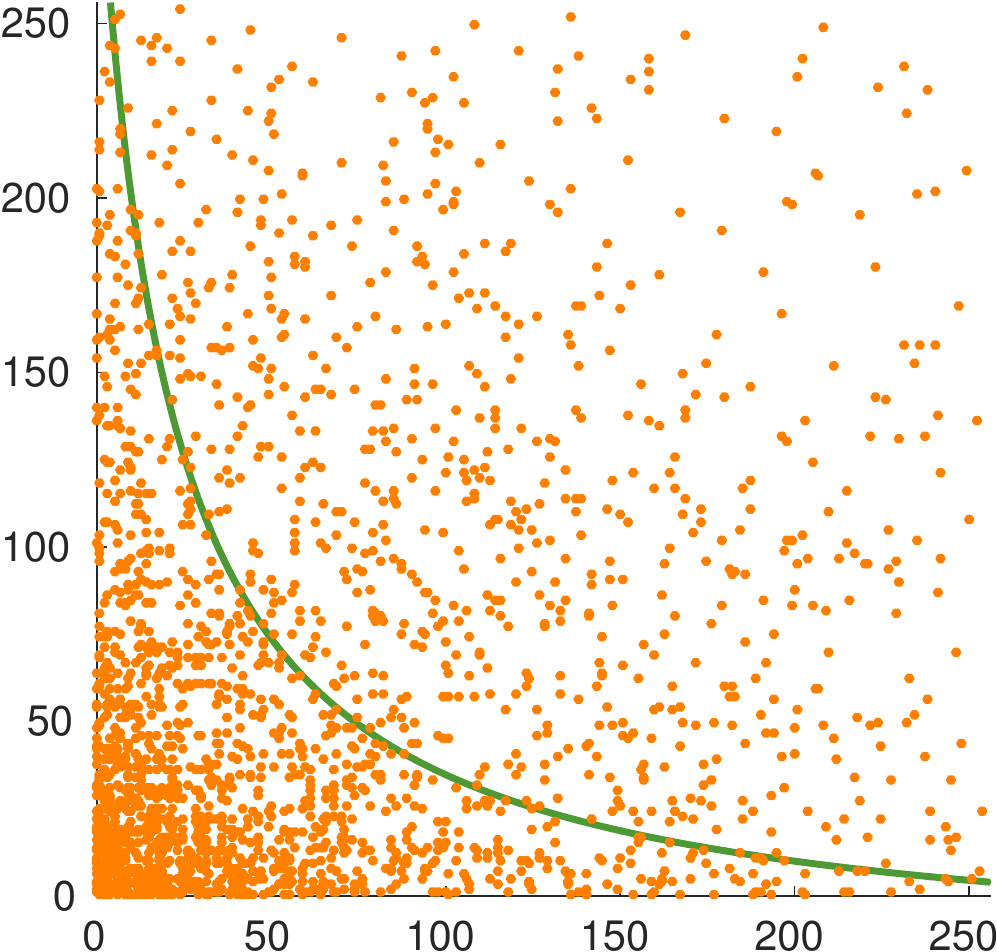}
		\label{fig:erdos_spectClust_after}
	}
	\caption{Examples of a community obtained from spectral clustering of the \erdos data.}
	\label{fig:erdos_spectClust}
\end{figure}
\begin{figure}[!h]
	\centering
	\subfloat[]{
		\centering
		\includegraphics[width=0.472 \linewidth]{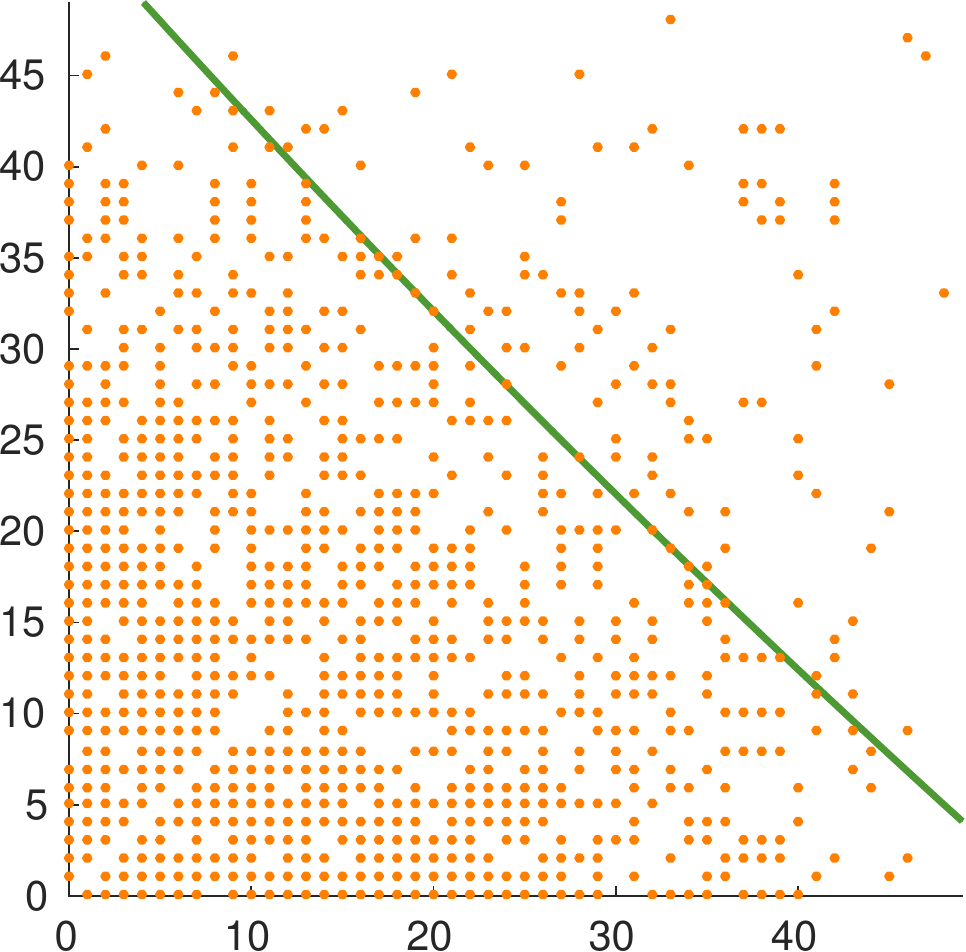}
		\label{fig:jazz_spectClust_1}
	}%
	\subfloat[]{
		\centering
		\includegraphics[width=0.472 \linewidth]{jazz_spectClust_2}
		\label{fig:jazz_spectClust_2}
	}%
		\caption{Examples of communities obtained from spectral clustering of the \jazz data fitted by our model.}
	\label{fig:jazz_spectClust}
\end{figure}
\begin{figure}[!h]
	\centering
	\subfloat[]{
		\centering
		\includegraphics[width=0.472 \linewidth]{email_spectClust_1}
		\label{fig:email_spectClust_1}
	}%
		\subfloat[]{
		\centering
		\includegraphics[width=0.472 \linewidth]{email_spectClust_2}
		\label{fig:email_spectClust_2}
	}%
	\caption{Examples of communities obtained from spectral clustering of the \emaild data fitted by our model.}
	\label{fig:email_spectClust}
\end{figure}

\begin{figure}[!h]
	\centering
	\subfloat[]{
		\centering
		\includegraphics[width=0.472 \linewidth]{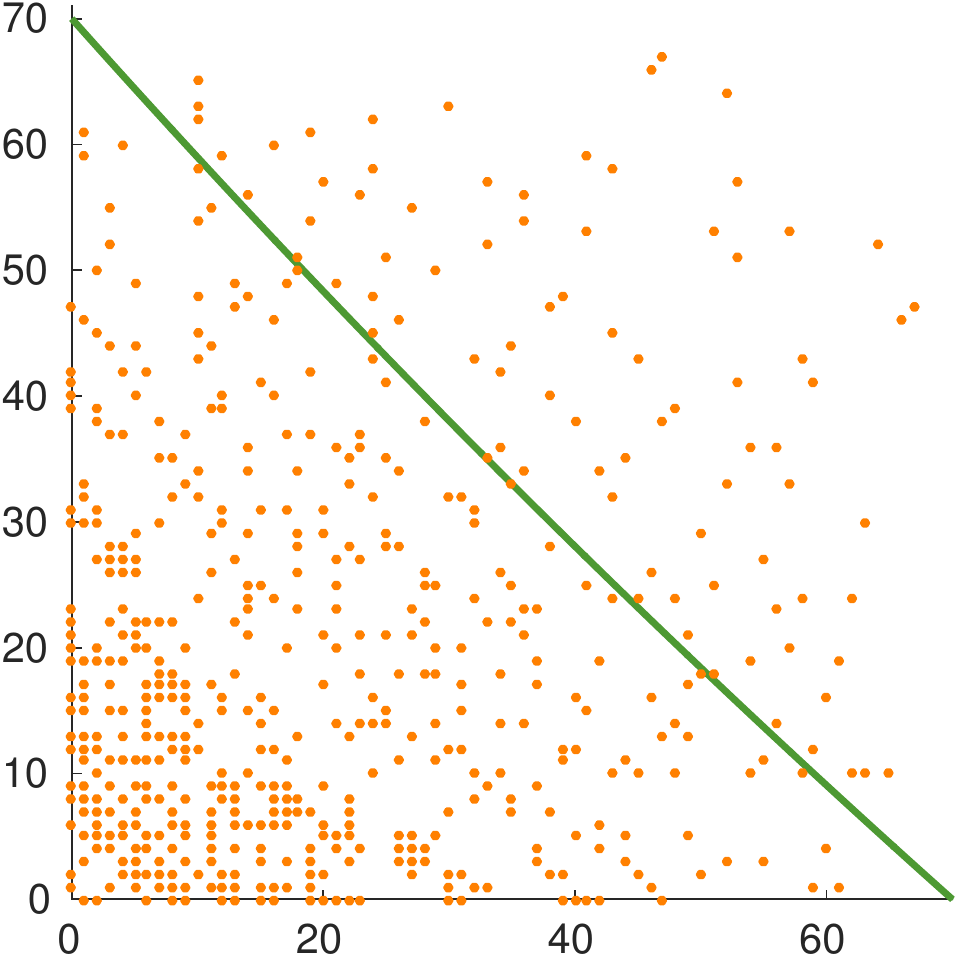}
		\label{fig:email_BMF_1}
	}%
	\subfloat[]{
		\centering
		\includegraphics[width=0.472 \linewidth]{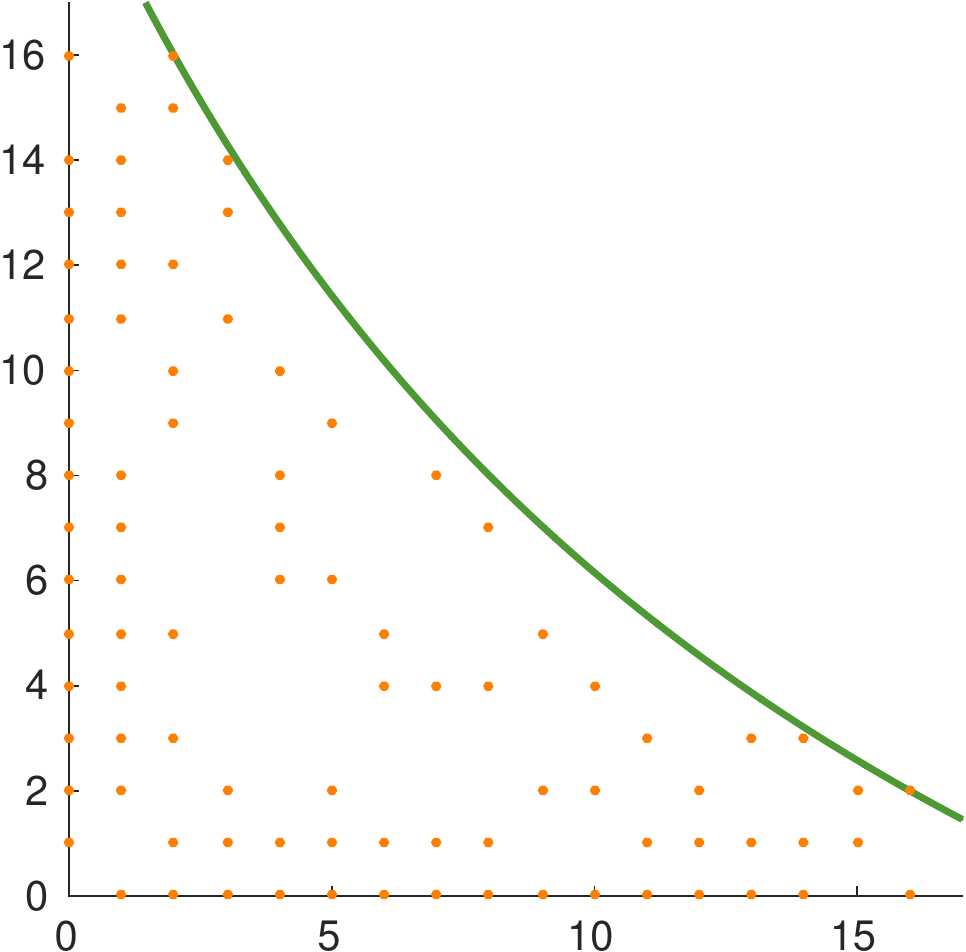}
		\label{fig:email_BMF_2}
	}
	\caption{Examples of communities obtained from Boolean matrix factorization of the \emaild data.}
	\label{fig:email_BMF}
\end{figure}
\begin{figure}[!h]
	\centering
	\subfloat[]{
		\centering
		\includegraphics[width=0.472 \linewidth]{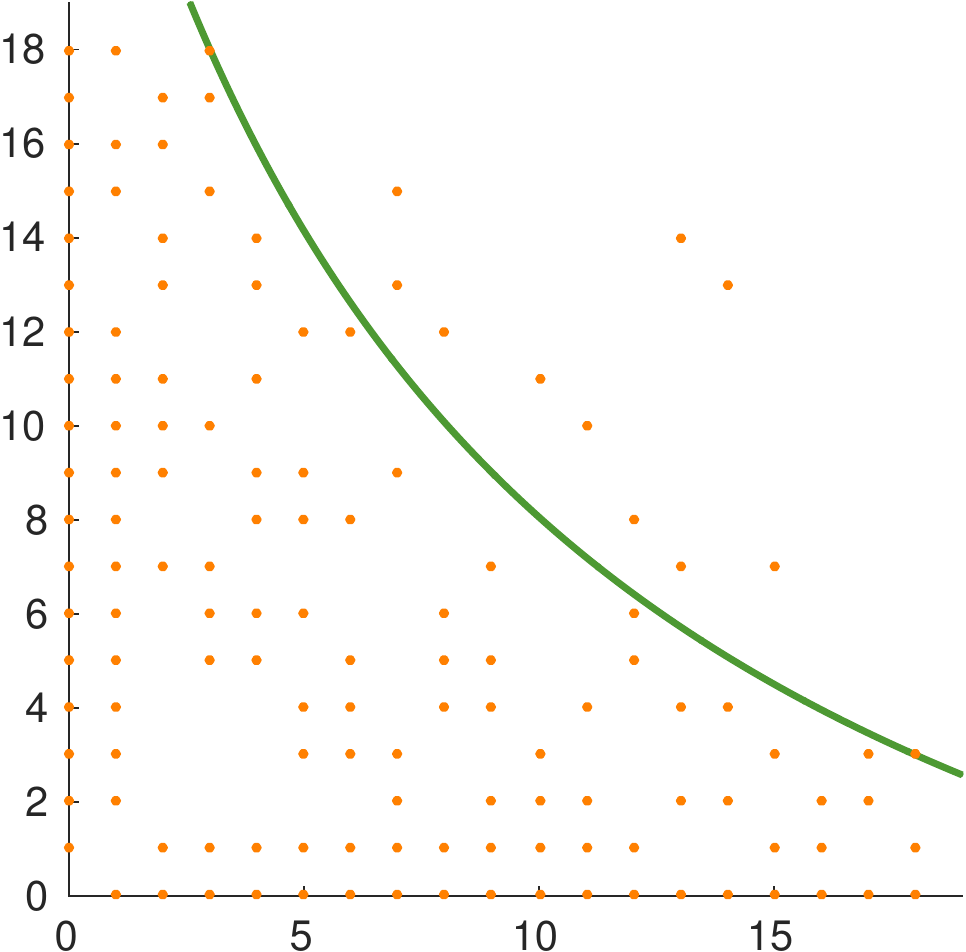}
		\label{fig:erdos_BMF_1}
	}%
	\subfloat[]{
		\centering
		\includegraphics[width=0.472 \linewidth]{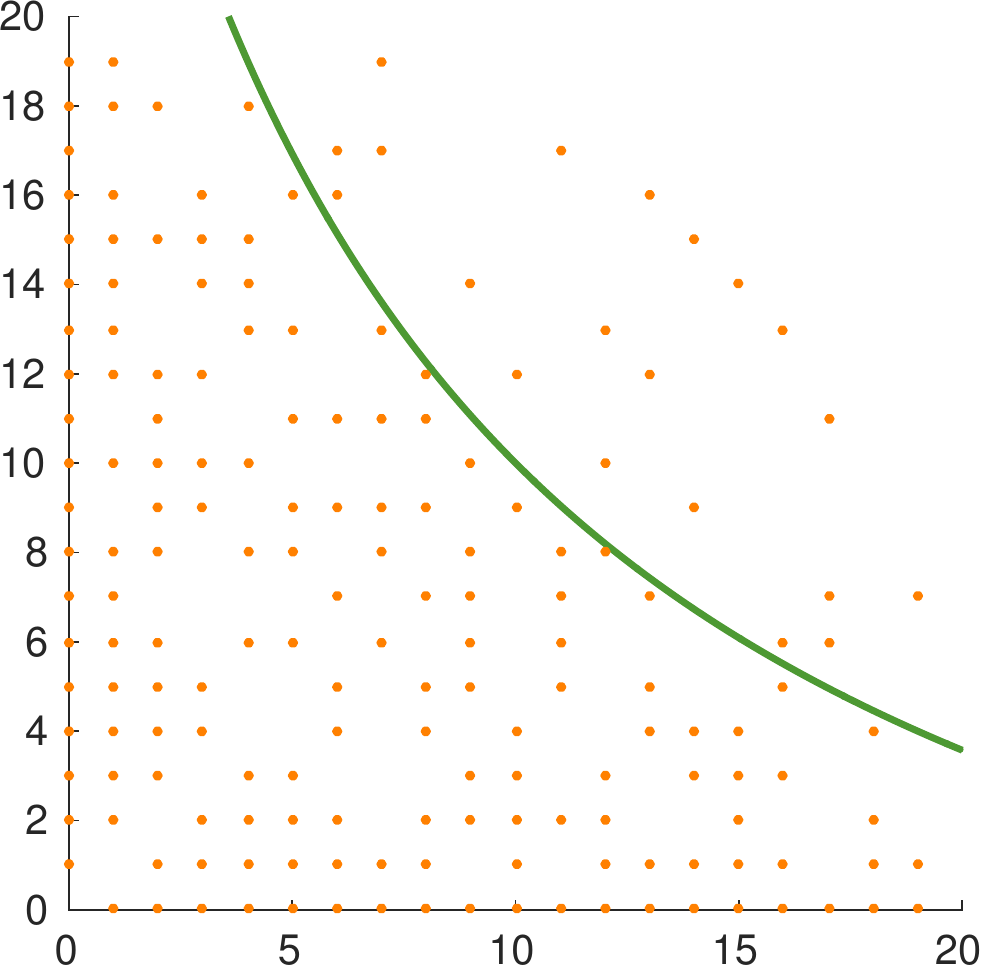}
		\label{fig:erdos_BMF_2}
	}
	\caption{Examples of communities obtained from Boolean matrix factorization of the \erdos data.}
	\label{fig:erdos_BMF}
\end{figure}
\begin{figure}[!h]
	\centering
	\subfloat[]{
		\centering
		\includegraphics[width=0.472 \linewidth]{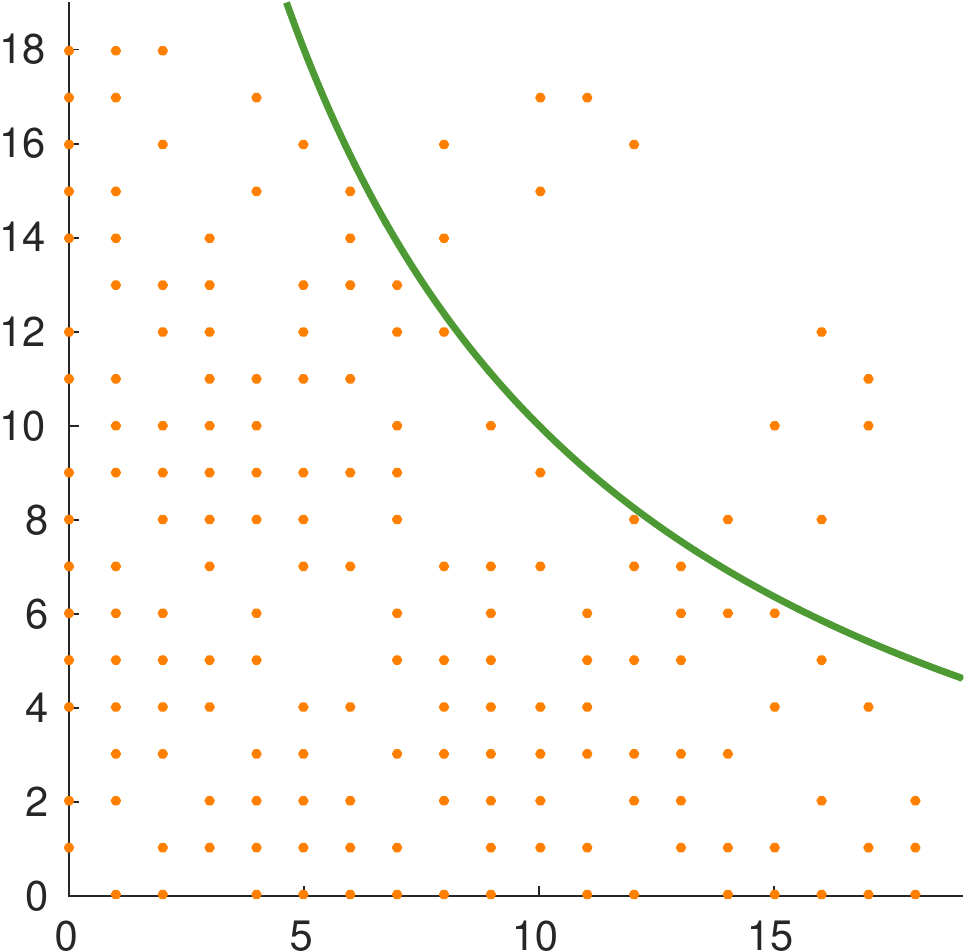}
		\label{fig:polbooks_BMF_1}
	}%
	\subfloat[]{
		\centering
		\includegraphics[width=0.472 \linewidth]{polbooks_BMF_2}
		\label{fig:polbooks_BMF_3}
	}
	\caption{Examples of communities obtained from Boolean matrix factorization of the \polbooks data.}
	\label{fig:polbooks_BMF}
\end{figure}
\begin{figure}[!h]
	\centering
			\subfloat[]{
			\centering
			\includegraphics[width=0.472 \linewidth]{jazz_BMF_1}
			\label{fig:jazz_BMF_1}
		}%
		\subfloat[]{
			\centering
			\includegraphics[width=0.472 \linewidth]{jazz_BMF_3}
			\label{fig:jazz_BMF_3}
		}
	\caption{Examples of communities obtained from Boolean matrix factorization of the \jazz data.}
	\label{fig:jazz_BMF}
\end{figure}
	
%\pagebreak
%\bibliographystyle{abbrv}
%\bibliography{bibliography}

\end{document}